\theoremstyle{plain}
\newtheorem{proposition}{Proposition}[section]
\newtheorem{theorem}[proposition]{Theorem}
\theoremstyle{definition}
\theoremstyle{remark}
\newtheorem{remark}[proposition]{Remark}
\newcommand{\rhs}{r.h.s.\ }
\newcommand{\lhs}{l.h.s.\ }
\newcommand{\wrt}{w.r.t.\ }
\newcommand{\cf}{cf.\ }
\newcommand{\ud}{\mathrm{d}}
\newcommand{\del}{\partial}
\newcommand{\eps}{\varepsilon}
\DeclareMathOperator{\supp}{supp}
\newcommand{\betrag}[1]{{\lvert #1 \rvert}}
\newcommand{\R}{\mathbb{R}}
\newcommand{\C}{\mathbb{C}}
\newcommand{\skal}[2]{\left< #1 , #2 \right>}
\newcommand{\order}{\mathcal{O}}
\newcommand{\id}{\mathrm{id}}
\newcommand{\1}{\mathbbm{1}}
\newcommand{\diag}{\mathrm{diag}}
\newcommand{\WDp}[1]{\colon \negthickspace #1 \! \colon \negthickspace }
\DeclareMathOperator{\WF}{WF}
\DeclareMathOperator{\Had}{Had}
\newcommand{\N}{\mathbb{N}}
\newcommand{\ket}[1]{\lvert #1 \rangle}
\newcommand{\braket}[2]{\langle #1 \vert #2 \rangle}
\newcommand{\Pei}[2]{\lfloor #1, #2 \rfloor}
\newcommand{\sst}[1]{\scriptscriptstyle{#1}}  
\newcommand{\CatSub}{\mathbf{Sub}}
\newcommand{\CatVec}{\mathbf{Vec}}
\newcommand{\CatAlg}{\mathbf{Alg}}
\newcommand{\CatdgA}{\mathbf{dgA}}
\newcommand{\HS}{\mathcal{H}}
\newcommand{\F}{\mathfrak{F}}
\newcommand{\A}{\mathfrak{A}}
\newcommand{\E}{\mathfrak{E}}
\newcommand{\D}{\mathfrak{D}}
\newcommand{\g}{\mathfrak{g}}
\newcommand{\V}{\mathfrak{V}}
\newcommand{\BV}{\mathfrak{BV}}
\newcommand{\CE}{\mathfrak{CE}}
\newcommand{\X}{\mathcal{X}}
\newcommand{\Scal}{\mathcal{S}}
\newcommand{\Y}{\mathcal{Y}}
\newcommand{\Tens}{\mathfrak{Tens}}
\newcommand{\Poin}{\mathcal{P}}
\newcommand{\Cauchy}{\mathcal{C}}
\DeclareMathOperator{\ran}{ran}
\DeclareMathOperator{\Lap}{\bigtriangleup}
\newcommand{\pg}{\mathrm{pg}}
\newcommand{\af}{\mathrm{af}}
\newcommand{\gh}{\mathrm{gh}}
\newcommand{\mc}{{\mu\mathrm{c}}}
\newcommand{\ml}{\mathrm{ml}}
\newcommand{\reg}{\mathrm{reg}}
\newcommand{\ren}{\mathrm{r}}
\newcommand{\loc}{\mathrm{loc}}
\newcommand{\nm}{\mathrm{nm}}
\newcommand{\ia}{\mathrm{int}}
\newcommand{\ext}{\mathrm{ext}}
\newcommand{\al}{\alpha}
\newcommand{\Ga}{\Gamma}
\newcommand{\De}{\Delta}
\newcommand{\La}{\Lambda}
\newcommand{\ph}{\varphi}
\newcommand{\DDp}{\Gamma'_{\Delta_{D}}}
\newcommand{\DD}{\Gamma_{\Delta_{D}}}
\newcommand{\DCp}{\Gamma'_{\Delta}}
\newcommand{\T}{\cdot_{{}^\mathcal{T}}}
\newcommand{\TR}{\cdot_{{}^{\TTR}}}
\newcommand{\delTo}{\delta^{\sst{\TT}}_{S_0}}
\newcommand{\TT}{\mathcal{T}}
\newcommand{\TTR}{\mathcal{T}_\ren}
\newcommand{\TRH}{\cdot_{{}^{\TTR}}}
\newcommand{\expT}[1]{e^{#1}_{\sst{\TT}}}
\newcommand{\expTR}[1]{e^{#1}_{\sst{\TTR}}}
\begin{document}

\title{The effective theory of strings}
\author[1]{Dorothea Bahns\thanks{bahns@uni-math.gwdg.de}}
\author[2]{Katarzyna Rejzner\thanks{katarzyna.rejzner@desy.de}}
\author[1]{Jochen Zahn\thanks{jzahn@uni-math.gwdg.de}}

\renewcommand\Affilfont{\itshape\small}
\affil[1]{Courant Research Centre ``Higher Order Structures'', University of G\"ottingen, Bunsenstra{\ss}e 3--5, 37073 G\"ottingen, Germany}
\affil[2]{II.\ Institut f\"ur Theoretische Physik, Universit\"at Hamburg, Luruper Chaussee 149, 22761 Hamburg, Germany}

\maketitle

\begin{abstract}
We show that the Nambu--Goto string, and its higher dimensional generalizations, can be quantized, in the sense of an effective theory, in any dimension of the target space. The crucial point is to consider expansions around classical string configurations. We are using tools from perturbative algebraic quantum field theory, quantum field theory on curved spacetimes, and the Batalin--Vilkovisky formalism. Our model has some similarities with the L\"uscher--Weisz string, but we allow for arbitrary classical background string configurations and keep the diffeomorphism invariance.
\end{abstract}

\section{Introduction}

It is part of the lore of string theory that bosonic strings in an ambient target space $M$ can be quantized consistently only when the dimension of $M$ is $26$. Essentially, the need for this critical dimension can be traced to the representation theory of the Virasoro algebra, whose generators are identified with the mode expansions of the constraints governing the system. In the framework of the Polyakov action, this constraint is the requirement that the energy-momentum tensor should vanish \cite{GreenSchwarzWittenI}, whereas in a covariant quantization of the Nambu--Goto action, it stems from the diffeomorphism invariance \cite{Rebbi74}. 

While the study of the proposed extra dimensions has triggered numerous interesting developments especially in mathematics, there have been indications in the past that there are quantization schemes for the Nambu--Goto string which do not require the ambient space's dimension to be fixed~\cite{Pohlmeyer,ThiemannLoopString,MeusburgerRehren}. In the present paper we propose yet another quantization scheme which does not require such a critical dimension. It is based on  perturbative methods in the framework of (generally) covariant field theories, and is also valid for higher dimensional objects (``branes'').

Historically, the starting point of string theory was the dual resonance model, which was then cast into the Nambu--Goto action. This action is the straightforward generalization of the action of the free relativistic (0-dimensional) particle moving in a target space $M$ to a 1-dimensional extended object.
While the former measures the geodesic length of the particle's worldline, the latter measures the world-sheet volume swept out by the string. In this sense, the Nambu--Goto action is geometric, and it is in particular invariant under changes of the worldsheet's parametrization. Just as one can pass from the length functional to the energy functional, one can pass from the Nambu--Goto action to the (quadratic) Polyakov action\footnote{In the case of branes one passes to the Polyakov-Howe action.}. It has the same classical solutions, but that does not automatically entail that the quantum theories are in some sense `equivalent', since in the quantum theory, also off-shell configurations have to be taken into account.
In particular, the expectation value for the position of the string in the usual vacuum state is constant, i.e., it describes an event in the target space, and not a two-dimensional submanifold. For such configurations, the Nambu--Goto action is singular, which casts more doubts on the equivalence of the Nambu--Goto and the Polyakov action. In our approach, the expansion around submanifolds of the correct dimension is crucial for the absence of anomalies.
Moreover, the Polyakov action introduces new degrees of freedom (a non-physical metric on the worldsheet), and new symmetries (conformal invariance).
As shown in \cite{BrandtTroostVanProeyen96,GomisParisSamuel}, it is the ghost corresponding to this new symmetry which is responsible for the anomalies of the Polyakov action (absent only in 26 dimensions). This symmetry is not present in our model, which is anomaly-free in any dimension of the target space.

On the other hand, also the canonical quantization of the Nambu--Goto action \cite{Rebbi74} requires the critical dimension $n = \dim M= 26$. 
However, this quantization scheme has some defects that are usually not mentioned in the literature. To see this, let us recall the setting.
The starting point is a parametrization $\tilde X$ of the worldsheet (immersed in the target Minkowski space $M$), whose components in Cartesian coordinates (at fixed parameter time $\tau = 0$) $\tilde X^a$, $0 \leq a \leq n-1$ are interpreted as the position variables. One then chooses the so-called orthonormal gauge and computes the corresponding momenta $P^a$. Expanding in Fourier modes, one obtains, for the open string,
\begin{align*}
 \tilde X^a(\sigma) & = q^a + \sum_{n = 1}^\infty \frac{1}{n} \cos n \sigma \left( \alpha_n^a + {\alpha_n^a}^* \right), \\
 P^a(\sigma) & = p^a + \sum_{n = 1}^\infty i \cos n \sigma \left( - \alpha_n^a + {\alpha_n^a}^* \right),
\end{align*}
where $\sigma$ is the worldsheet coordinate supplementing $\tau$, $q^a$ is the center of mass at $\tau = 0$, $p^a$ is the total momentum, and the $\alpha$'s are higher order Fourier mode coefficients. In the canonical quantization scheme, one then has the commutation relations
\begin{align}
\label{eq:Comm_qp}
 [q^a, p^b] & = i  h^{ab}, \\
 [\alpha_n^a, {\alpha_m^b}^*] & = n \delta_{mn} h^{ab}, \nonumber
\end{align}
where $h$ is the metric on the target Minkowski space. Hence, after a change of their normalization, the $\alpha^*$'s and $\alpha$'s are represented as creation and annihilation operators on a Fock space, and the $p^a$ and $q^a$ are represented in the Schr\"odinger representation on $L^2(\R^n)$.
Physical states have, among others, to fulfill the condition $(L_0 - a_0) \ket{\psi} = 0$, where
\[
 L_0 = \tfrac{1}{2} p^2 + \sum_{n=1}^\infty {\alpha_n^a}^* h_{ab} \alpha_n^b,
\]
and $a_0$ is a real number. It follows that physical states must be elements of $L^2(\R^n)$, supported on a discrete set of mass hyperboloids. But this set is empty, so there are no physical states.\footnote{For this conclusion it is crucial that we have canonical commutation relations \eqref{eq:Comm_qp}. This is the, usually neglected, difference to dual resonance models, where the $p^a$ are central.}
This problem seems to have been first mentioned in \cite{GrundlingHurst}\footnote{We thank Peter Bouwknegt for pointing out this reference to us.}. It is also treated in \cite{DimockString2}, where the Hilbert space is modified to match the physicality condition. However, the problem of representing the $q$'s on this Hilbert space is left open. While there certainly are sensible representations, it is clear that these will violate \eqref{eq:Comm_qp}.

This problem is absent in the so-called light cone gauge. There, one chooses a time-like direction in target space and fixes the parametrization completely by setting $\tilde X^+(\sigma) = 0$ and $P^+(\sigma) = p^+$. Hence, one can eliminate the canonical pair $\{q^+, p^-\}$ (and the oscillator modes ${\alpha^\pm_m}^{(*)}$), using the Dirac bracket instead of the Poisson bracket, \cf \cite[Sec.\ 12.5.4]{BrinkHenneaux}. But this singles out a preferred direction, as states can now be arbitrarily well localized in both $q^+$ and $p^-$, which is not possible for the other canonical pairs $\{ q^-, p^+ \}, \{ q^i, p^i \}$, for $i$ in the transverse direction. We note that this statement is not in contradiction with the fact that the Lorentz algebra can be represented in the case $n=26$, as this representation is non-linear, i.e., $\delta_\Lambda q^a \neq \Lambda^a_b q^b$.

Also our approach breaks Lorentz invariance, but this is caused by spontaneous symmetry breaking, i.e., the choice of a ``ground state'', and not by choosing a parametrization, which should be physically irrelevant. Furthermore, we can ensure the preservation of the Poincar\'e symmetry in the renormalized theory.

Our main motivation for the study of the Nambu--Goto action is that it exhibits diffeomorphism invariance, so it may be seen as a toy model for quantum gravity. Nevertheless, there might be physical applications, for example the description of vortex lines in QCD, or, in the case of membranes, of phase boundaries. Another possibility would be to consider a four-dimensional membrane embedded into a higher-dimensional space as a model for space-time. The Nambu--Goto action then plays the role of the cosmological constant.

Let us now sketch our construction. Our aim is to interpret string theory as a geometric theory of Lorentzian submanifolds of a Lorentzian target space $M$. We are working perturbatively, and the main conceptual step is to realize what this actually means in the context of quantization of submanifolds of a spacetime $M$. In perturbation theory, one chooses some classical solution, referred to as the background in the following, and considers small fluctuations around it. In the present context, a background is itself a submanifold $\Sigma$, and the actual submanifold $\tilde \Sigma$, referred to as the dynamical submanifold, can be parametrized by smooth sections $\varphi \in \Gamma^\infty(\Sigma, TM)$, where one uses the exponential map
of $M$ to map $\Sigma$ to $\tilde \Sigma$. Concretely, we consider
\[
\tilde X(x) = \exp_{X(x)}(\lambda \varphi(x)),
\]
where $X$ is the embedding $\Sigma \to M$, $x \in \Sigma$, and $\lambda$ is a length scale. Expanding the Nambu--Goto action in $\lambda$, one obtains a classical field theory on $\Sigma$. As by the expansion in $\lambda$ we obtain terms of arbitrarily high order, the quantized theory has to be understood in the sense of an effective one.

This classical field theory still has reparametrization invariance, as any diffeomorphism of $\Sigma$ induces a reparametrization of $\tilde \Sigma$, i.e., a change of the section $\varphi$ that leaves the image invariant. In order to deal with this symmetry, we employ the Batalin--Vilkovisky (BV) formalism. For the sake of simplicity in concrete calculations, we will at some point restrict ourselves to flat target manifolds, i.e., Minkowski space.

For perturbation theory to make sense, we have to restrict to background submanifolds $\Sigma$ that are on-shell\footnote{More precisely, the deviation from being on-shell must be at least first order in the expansion parameter.}, i.e., that locally fulfill the Euler--Lagrange equations corresponding to the Nambu--Goto Lagrangean. 
We do not consider the theory on some fixed background, but consider all backgrounds simultaneously, in a coherent way. This is done in the spirit of the general covariant locality principle \cite{BrunettiFredenhagenVerch}. Mathematically, one uses the language of category theory.

The quantization is performed in the setting of perturbative algebraic quantum field theory, i.e., via deformation quantization. By construction, one obtains the background configuration $X(x)$ as the expectation value of $\tilde X(x)$ in any quasi-free state of the free field algebra\footnote{Let us note that in the conventional approach discussed above, the expectation value of $\tilde X^a(\sigma)$ in a non-excited state (more generally a state with fixed excitation number for each mode) does not depend on $\sigma$. In this sense, the ground state does not correspond to a string, but to a particle. As the quantization is anomaly-free in our setting, it is tempting to speculate that the anomaly found in the standard approach is due to an inappropriate choice of the vacuum.} (at least in the case of a flat background, in the general case at first order in $\lambda$). A crucial prerequisite for the treatment of interacting theories is the existence of Hadamard functions compatible with the gauge symmetry. We show that such Hadamard functions always exist for the open string and explicitly construct a representation of the free field algebra of the Dirichlet string with a positive definite physical Hilbert space.

We renormalize the theory in the Epstein--Glaser framework, i.e., by extension of distributions. Here it is crucial to do the renormalization in a manner that is covariant \wrt coordinate changes on the background $\Sigma$. This is achieved by techniques from quantum field theory on curved spacetimes \cite{HollandsWaldTO}. These methods also ensure that we can quantize in a way that respects the Poincar\'e covariance of the target Minkowski space. Here, covariance means covariance under simultaneous transformations of the background $\Sigma$ and the section $\varphi$ describing $\tilde \Sigma$. This guarantees that the Poincar\'e symmetry is preserved in the renormalized theory.

The question as to whether the reparametrization invariance can be kept in the renormalized theory can be answered by studying the cohomology of the Batalin--Vilkovisky complex at ghost number one \cite{BarnichBrandtHenneaux00,RejznerFredenhagenQuantization}. We will show that this cohomology is trivial, so that no anomalies occur.
This follows almost trivially from the fact that $\ud X$ has maximal rank, which highlights the importance of expanding around a background configuration of the correct dimension.

The idea to consider fluctuations off a classical background string is also the basis of the approach of L\"uscher and Weisz \cite{LuscherWeisz}. However, they only allow for transversal fluctuations, i.e., for a flat background
\begin{equation}
\label{eq:flatBackground}
 \R \times [0,\pi]  \ni (\tau, \sigma) \mapsto X(\tau, \sigma) = (\tau, \sigma, 0, \dots, 0) \in M
\end{equation}
they parametrize deviations as
\[
 \tilde X(\tau, \sigma) = (\tau, \sigma, \phi^1(\tau, \sigma), \dots, \phi^{n-2}(\tau, \sigma)),
\]
with, e.g., Dirichlet boundary conditions $\phi^j(\tau, 0) = \phi^j(\tau, \pi) = 0$.
Obviously, one can not parametrize a generic Dirichlet string in this way, as foldings in the transversal directions are not possible. That the physical degrees of freedom are the transversal fluctuations is put in by hand in this approach, whereas in our setting it is ensured, at the linearized level, by the BV formalism. In this context, we point out that the restriction to transversal fluctuations corresponds to the abelianization of the gauge transformation of the free action, \cf \cite[Sec.~17.4.1]{HenneauxTeitelboim}. As noted there, the corresponding field redefinitions are in general nonlocal. This is not problematic in an effective field theory setting, but one expects that it is rather tedious to identify the admissible forms of the higher order terms. We refer to \cite{AharonyDodelson,DubovskyFlaugerGorbenko} for recent discussions of this topic. In our approach, the higher order terms are completely fixed by the Nambu--Goto action and, at each order, a finite number of renormalization constants multiplying terms with straightforward geometric interpretation. The exact correspondence between the L\"uscher--Weisz string and our model is certainly a very important question to investigate in the future.
See also Remark~\ref{rem:Kleinert} below for some comments on this issue.

Whether there is any relation to the effective string theory of Polchinski and Strominger \cite{PolchinskiStrominger} is much less clear. There, one expands in powers of $R^{-1}$, where $R$ is the radius of the compactified dimension around which the string is wrapped. Clearly, such an expansion differs from our expansion around a classical string configuration, so at least at first sight, there is no obvious connection. Furthermore, the action  considered there is not obtained by the expansion of the Nambu--Goto action, but by adding terms to the Polyakov action. For a recent attempt to clarify the relation between the L\"uscher--Weisz and Polchinski--Strominger effective string theories, see \cite{DubovskyFlaugerGorbenko}.

In our approach, the observables are localized, i.e., compactly supported, and local. This is in contrast to the observables in Pohlmeyer's approach \cite{MeusburgerRehren}, which are neither localized nor local.

We would like to point out that the observables in our model are parametrized by test sections on the background\footnote{Even though these test sections may be obtained from test functions on the target space, \cf Section~\ref{sec:Fields}.} and are local \wrt the induced metric on the background. Hence, our model is not related to the string-localized fields occurring in infinite spin representations of the Poincar\'e group \cite{MundSchroerYngvason04}.

The main conceptual open problem is to show the (perturbative) background independence of the framework, in the sense defined in \cite{BrunettiFredenhagenQG} or \cite{HollandsWaldStress}.

The article is organized as follows: In the next section, we introduce the setup, i.e., the categorial description, the relevant test function and configuration spaces, and the observable algebras. In Section~\ref{sec:Diffeo}, we discuss the diffeomorphism symmetry and the resulting BV complex. In Section~\ref{sec:GaugeFixing}, we perform the gauge fixing and discuss the resulting structure of the Peierls bracket and the BRST current. Physical observables (fields) are defined in Section~\ref{sec:Fields}. The quantization is performed in Section~\ref{sec:Quantization}. The relevant aspects of the cohomology of the BV complex are discussed in Section~\ref{sec:Cohomology}.

\section{The setup}
\label{sec:Setup}

We consider $d$-dimensional ($d \geq 2$) submanifolds (worldsheets) of a fixed time-oriented globally hyperbolic spacetime $(M, h)$ (the target space) with signature $(-, +, \dots, +)$ and dimension $n$.
We also require $M$ to be geodesically convex and complete\footnote{This assumption can be dropped if one works in a fully perturbative setting throughout, \cf Remark~\ref{rem:GeodesicCompleteness}.}, i.e., for each $p \in M$ the exponential map $\exp_p: T_p M \to M$ is a bijection.  
The submanifolds are required to have non-degenerate metric with hyperbolic signature. Furthermore, they inherit the time-orientation of $M$ and are required to be orientable. A particular instance are strings ($d=2$). In the following we use greek letters for the indices corresponding to vectors in $T \Sigma$, and roman ones for vectors in $T M$.

Given coordinates $x$ on $\Sigma$, the Nambu--Goto Lagrangean density is given by the induced volume form on $\Sigma$, i.e.,
\[
 \ell_\mathrm{NG} = \sqrt{-g} \ud x,
\]
where $g$ is the determinant of the induced metric on $\Sigma$, $g = X^* h$, or, written in local coordinates,
\[
 g_{\mu \nu}(x) = \ud X^a_\mu(x) h_{ab}(X(x)) \ud X^b_\nu(x).
\]
Here $X: \Sigma \to M$ is the embedding and $\ud X$ the corresponding differential. We say that a submanifold $\Sigma$ is \emph{on-shell} if it locally\footnote{We do not impose any boundary conditions.} fulfills the corresponding Euler--Lagrange equations
\[
 \del_\mu \left( \sqrt{-g} g^{\mu \nu} \ud X^a_\nu \right) = 0.
\]

When perturbatively expanding around such a $\Sigma$, we call it the \emph{background}. However, we do not fix a single one, but consider all possible backgrounds simultaneously,
in the spirit of \cite{BrunettiFredenhagenVerch}. This is encoded in the language of category theory. To this end, we define the category $\CatSub$ by\footnote{Note that we deviate here from the usual setup of the general covariant locality principle \cite{BrunettiFredenhagenVerch}. There, the morphisms are isometric, causal embeddings. Here, $\Sigma \to \Sigma'$ only if $\Sigma$ is a submanifold of $\Sigma'$, because we consider distinct submanifolds $\Sigma$ and $\Sigma'$ as physically distinct, even if they are isometric (the position in the target space matters). Covariance under embeddings corresponding to global symmetries of $M$ is discussed in Section~\ref{sec:Poincare}.}
\begin{description}
\item[$\CatSub$:] The objects are $d$-dimensional, globally hyperbolic, oriented, on-shell submanifolds of $M$. If $\Sigma \subset \Sigma'$, with compatible orientation, and $\gamma \subset \Sigma$ for all causal paths $\gamma$ in $\Sigma'$ connecting two arbitrary elements $x,y \in \Sigma$, then there is a morphism $\chi: \Sigma \to \Sigma'$.
\end{description}
The restriction to globally hyperbolic submanifolds (see \cite{WaldGR} for a definition) is done to ensure the well-posedness of the Cauchy problem.

\begin{remark}
It is well-known that solutions of the minimal volume problem in hyperbolic signature in general develop singularities, c.f., for example, \cite{EggersHoppe} and references therein.\footnote{As shown in \cite{MuellerMinimalSurfaces}, the Cauchy problem for the closed Nambu--Goto string is well-posed in the following sense: Solutions exist locally on the source space, but globally in target space (they intersect all Cauchy surfaces). However, these solutions are in general not  immersions, only locally. For membranes which are initially close to being flat, there are results ensuring global existence \cite{AllenAnderssonIsenberg}.}
An example is the well-known oscillating cylinder solution
\[
 (\tau, \sigma) \mapsto (\tau, \cos \tau \cos \sigma, \cos \tau \sin \sigma, 0, \dots, 0),
\]
which has self-intersections at $\tau = (k+1/2) \pi$ at which the induced metric degenerates. For our approach, the requirement of a globally hyperbolic induced metric is crucial. Hence, we will in general have to restrict the classical solution. In the above example, one could simply restrict to $\tau \in (-\pi, \pi)$. In this sense, our approach treats the local excitations of minimal surfaces or membranes. Also note that in our approach we can not describe a change of the topology, so we only consider the self-interaction of membranes.
\end{remark}

Other submanifolds $\tilde \Sigma$ are now parametrized by $\varphi \in \Gamma^\infty(\Sigma, TM)$, i.e., smooth sections of the tangent bundle. We define 
\begin{align*}
 \tilde X : \Sigma & \to M \\
 x & \mapsto \exp_{X(x)}(\lambda \varphi(x)),
\end{align*}
where $\exp$ is the exponential map in $M$, and $\lambda$ is a formal expansion parameter of dimension length$^{d/2}$.
In principle, one should ensure that the image of $\ud \tilde X$ in $T_{\tilde X} M$ is $d$-dimensional and contains time-like directions. However, we will mostly work perturbatively, i.e., in the sense of formal power series in $\lambda$, and at zeroth order the condition is fulfilled by construction, as $\tilde X = X + \order(\lambda)$. Hence, no further restrictions on $\varphi$ are necessary. Even though it is not a submanifold in this perturbative setup, we will for convenience call the image $\tilde \Sigma$ of $\tilde X$ the \emph{dynamical submanifold}.

Other categories that we will need in the following are:
\begin{description}
\item[$\CatVec$:] The objects are locally convex vector spaces. The morphisms are continuous linear maps.
\item[$\CatVec_i$:] As $\CatVec$, but the morphisms are required to be injective maps.
\item[$\CatAlg^{(*)}$:] The objects are topological ($*$) algebras. The morphisms are continuous injective ($*$) algebra homomorphisms.
\item[$\CatdgA$:] The objects are differential graded algebras. The morphisms are injective differential graded algebra homomorphisms.
\end{description}
Obviously, by applying the forgetful functor, objects of $\CatAlg^{(*)}$ and $\CatdgA$ can also be considered as objects of $\CatVec_i$.

We define the configuration space as $\E(\Sigma) \doteq \Gamma^\infty(\Sigma, TM)$, equipped with the usual locally convex topology of uniform convergence of all derivatives on compact sets. This defines a contravariant functor $\E: \CatSub \to \CatVec$.
The morphism of $\CatVec$ corresponding to the morphism $\chi: \Sigma \to \Sigma'$ is the restriction $\chi^*(\varphi) = \varphi|_{\Sigma}$.

A functor between $\CatSub$ and $\CatVec_i$ is the functor $\E_c$ which associates to each submanifold $\Sigma$ the vector space $\E_c(\Sigma) \doteq \Gamma^\infty_c(\Sigma, T M)$ of smooth compactly supported test vector fields with the usual locally convex topology. This functor is covariant with
\[
 \chi_*(\varphi)(x') = \begin{cases} \varphi(x') & x' \in \Sigma, \\ 0 & \text{otherwise.} \end{cases}
\]
Another covariant functor between these categories is the functor $\mathfrak{D}$ which associates to each submanifold $\Sigma$ the vector space $\D(\Sigma) \doteq C^\infty_c(\Sigma)$ of smooth compactly supported test functions with the usual locally convex topology. Again, the image of a morphism $\chi: \Sigma \to \Sigma'$ is given by the canonical embedding of $C_c^\infty(\Sigma)$ in $C_c^\infty(\Sigma')$.

It remains to define the algebra $\F(\Sigma)$ of functionals on $\E(\Sigma)$. As in \cite{RejznerFredenhagen}, one first defines $\F_{\mathrm{loc}}(\Sigma)$ as the space of compactly supported smooth local functionals. The smoothness is understood in the sense of calculus on locally convex topological vector spaces, i.e., the derivative
of a functional $F$ at $\ph$ in the direction of $\psi$ is defined as
\begin{equation}
\label{de}
F^{(1)}(\ph)(\psi) \doteq \lim_{t\rightarrow 0}\frac{1}{t}\left(F(\ph + t\psi) - F(\ph)\right).
\end{equation}
$F$ is called differentiable at $\ph$ if $F^{(1)}(\ph)(\psi)$ exists for all configurations $\psi$. It is called continuously differentiable if it is differentiable at all points of $\E(\Sigma)$ and
$F^{(1)}:\E(\Sigma)\times \E(\Sigma)\rightarrow \C, (\ph,\psi)\mapsto F^{(1)}(\ph)(\psi)$
is a continuous map. It is called a $\mathcal{C}^1$-map if it is continuous and continuously differentiable. Higher derivatives and continuous differentiability are defined analogously. $F$ is smooth if it is $k$ times continuously differentiable for all $k$.

The support of a functional $F$ is defined as
\begin{multline*}
\supp F=\{ x \in \Sigma| \text{ for all open }U\ni x\ \exists \varphi,\psi\in\E(\Sigma), \supp\,\psi\subset U 
\\ \text{ such that }F(\varphi+\psi)\not= F(\varphi) \}.
\end{multline*}
A functional $F$ is \emph{local} if it can be written as
\[
 F(\varphi) = \int_\Sigma f(x, \varphi(x), \nabla \varphi(x), \dots ) \mu(x),
\]
where $f$ is a smooth function on the jet space, compactly supported with respect to the $x$ variable, and $\mu = \sqrt{-g} \ud x$ is the induced volume form on $\Sigma$. The space of all local functionals is denoted by $\F_{\mathrm{loc}}(\Sigma)$. Observe that 
$\F_{\mathrm{loc}}$ is a covariant functor from $\CatSub$ to $\CatVec_i$. The algebra $\F(\Sigma)$ is now defined as the algebraic completion of $\F_{\mathrm{loc}}(\Sigma)$ under pointwise multiplication, i.e.,
\[
 (F_1 F_2)(\varphi) = F_1(\varphi) F_2(\varphi).
\]
Elements of $\F(\Sigma)$ are called \emph{multilocal} functionals.
$\F$ is also a covariant functor to $\CatVec_i$, and even to $\CatAlg$.
From the definition of smoothness follows that for $F \in \F_{\mathrm{loc}}(\Sigma)$, and thus also $F \in \F(\Sigma)$, one has
\begin{align*}
 F^{(1)}(\varphi) & \in \E_c(\Sigma), & F^{(k)}(\ph) & \in \Gamma^\infty(\Sigma^k, TM^{\otimes k})',
\end{align*}
for all $\ph \in \E(\Sigma)$.
Since we can reconstruct smooth functionals from the Taylor series expansion and we work in a perturbative setting, it is convenient to write the multilocal functionals as
\[
F(\ph) = F(0) + \sum_{k=1}^\infty \frac{1}{k!} \skal{F^{(k)}(0)}{\ph^{\otimes k}},
\]
where $\skal{\cdot}{\cdot}$ denotes the evaluation of a compactly supported distribution on a smooth section. This can be rewritten, using a  formal notation based on the use of distributional kernels, as
\[
 \skal{F^{(k)}(0)}{\ph^{\otimes k}} = \int f(x_1,\ldots,x_k) \ph(x_1) \ldots \ph(x_k) \mu(x_1) \ldots \mu(x_k),
\]
where $f\in\Gamma^\infty(\Sigma^k, TM^{\otimes k})'$. Denoting by $\Phi_x$ the evaluation functional $\Phi_x: \ph \mapsto \ph(x)$, the above expression can be compactly written as
\begin{equation}\label{polynom}
F^{(k)}(0)=\int f(x_1,\ldots,x_k) \Phi_{x_1}\ldots\Phi_{x_k} \mu(x_1) \ldots \mu(x_k).
\end{equation}
Functionals of this form are usually called \emph{polynomials} and we can think of them as a ``basis'' in $ \F(\Sigma)$. Since the above notation is frequently used in physics, we will use it throughout this paper, keeping in mind the precise mathematical meaning of such expressions given above. The condition of locality for a polynomial (\ref{polynom}) can be rephrased as the requirement that $f(x_1,\ldots,x_k)$ is a distribution supported on the thin diagonal $\Delta^k(\Sigma)\doteq\left\{(x,\ldots,x)\in \Sigma^k:x\in \Sigma\right\}$ and its wave front set\footnote{We refer to \cite{BrunettiFredenhagenScalingDegree} for a definition of the wave front set and its use in perturbative algebraic quantum field theory.} is orthogonal to the tangent bundle of  $\Delta^k$, i.e., $\WF(f)\perp T\Delta^k(\Sigma)$.

On $\F(\Sigma)$ (more precisely on a suitable extension allowing for a gauge fixing, see below) we will define 
the classical Poisson structure via the so-called Peierls bracket, and then deform it to obtain the quantized algebra. However, it turns out that $\F(\Sigma)$ is not closed under the Peierls bracket, so a more general space of functionals is needed. As in \cite{BDF09}, we consider the space of \emph{microcausal} functionals $\F_{\mu\mathrm{c}}(\Sigma)$, i.e., the space of smooth, compactly supported functionals such that, for $k \geq 1$,
\begin{multline*}
 \WF(F^{(k)}(\varphi)) \cap \\  \{ (x_1, \dots, x_k; p_1, \dots p_k) \in T^* \Sigma^k \setminus \{ 0 \} | p_i \in \bar V_{x_i}^+ \ \forall i \text{ or } p_i \in \bar V_{x_i}^- \ \forall i \} = \emptyset.
\end{multline*}
Here $\bar V_x^\pm$ denotes the closure of the forward (backward) light cone in $T^*_x \Sigma$.
The space $\F_\mc(\Sigma)$ is important also in the context of quantization, since it already contains the Wick products. See \cite{BrunettiFredenhagenScalingDegree} for more details.

In order to be compatible with the condition on wavefront sets, one introduces on $\F_{\mu\mathrm{c}}(\Sigma)$ a topology that allows to control this property. On each space $\Gamma^\infty(\Sigma^k, TM^{\otimes k})'$  one introduces the so-called H\"ormander topology and $\F_{\mu\mathrm{c}}(\Sigma)$ is equipped with the initial topology with respect to all maps $F\mapsto F(\varphi)\in \C$ and $F\mapsto F^{(k)}(\varphi)\in\Gamma^\infty(\Sigma^k, TM^{\otimes k})'$, for all $k \geq 1$, $\varphi\in\E(\Sigma)$. We denote this topology by $\tau_\Xi$.
 Since  $\F_{\textrm{loc}}(\Sigma)\subset\F(\Sigma)\subset\F_{\mu\mathrm{c}}(\Sigma)$, we can equip the first two of these spaces with the topology
induced from $\F_{\mu\mathrm{c}}(\Sigma)$. It was shown in \cite{BDF09} that local functionals are dense in the space of microcausal functionals with respect to $\tau_\Xi$.

In the next step, one introduces the action. For this, we need the induced metric on the dynamical submanifold $\tilde \Sigma$,
\begin{equation*}
 \tilde g_{\mu \nu}(x) = \ud \tilde X^a_\mu(x) h_{ab}(\tilde X(x)) \ud \tilde X^b_\nu(x).
\end{equation*}
This is to be understood in the sense of formal power series in $\lambda$.
We note that $\tilde g^{\mu \nu}$ is the inverse of $\tilde g_{\mu \nu}$ and is not obtained by raising the indices with $g^{\mu \nu}$. Furthermore, $\tilde g$ denotes the determinant of $\tilde g_{\mu \nu}$.
Following \cite{RejznerFredenhagen}, the Lagrangean can be defined as a natural transformation between the functors $\D$ and $\F_\mathrm{loc}$, i.e., we define the Nambu--Goto Lagrangean as 
\begin{equation}
\label{eq:Nambu-Goto}
 L_\Sigma(f)(\varphi) = \lambda^{-2} \int_\Sigma f \frac{\sqrt{- \tilde g}}{\sqrt{-g}} \mu,
\end{equation}
for $f \in \D(\Sigma)$.
The action $S$ is an equivalence class of Lagrangeans, where
\begin{equation}\label{equivalence}
L_1\sim L_2\ \textrm{if}\ \supp (L_{1,\Sigma}-L_{2,\Sigma})(f)\subset\supp\, \ud f \ \forall \Sigma, f \in \D(\Sigma).
\end{equation}

We now characterize the symmetries of the action. To this end, we need one more structure, namely
the vector fields $\V(\Sigma)$ on $\E(\Sigma)$. In the present setting, $\V(\Sigma)$ can be interpreted as the space of smooth maps $\X: \E(\Sigma) \to \E_c(\Sigma)$, i.e., we define the tangent space $T_\ph \E(\Sigma)$ to be $\E_c(\Sigma)$ for all $\ph \in \E(\Sigma)$ (\cf \cite{KrieglMichor} for the definition of vector fields on infinite-dimensional manifolds). There is a natural action as derivatives on $\F(M)$ by
\begin{equation}\label{DerivationVF}
 (\del_\X F)(\varphi) = \skal{F^{(1)}(\varphi)}{\X(\varphi)}.
\end{equation}
For a vector field $\X$ depending polynomially on $\ph$, we can write (\ref{DerivationVF}) formally as
\[
\del_\X F=\int f_\X(x_1,\ldots,x_k,y)\Phi_{x_1}...\Phi_{x_k}\frac{\delta F}{\delta \ph(y)} \mu(x_1) \ldots \mu(y),
\]
with some distribution $f_\X \in \Gamma^\infty(\Sigma^{k+1}, TM^{\otimes(k+1)})'$.
In this sense, we can think of the functional derivatives $\frac{\delta}{\delta \ph(y)}$ as ``generators'' of  $\V(\Sigma)$. In the physics literature they are called \textit{antifields} and are denoted by $\Phi^\ddagger_y\equiv\frac{\delta}{\delta \ph(y)}$. The above polynomial vector field can now formally be written as
\[
\X=\int f_\X(x_1,\ldots,x_k,y)\Phi_{x_1}...\Phi_{x_k}\Phi^\ddagger_y \mu(x_1) \ldots \mu(y).
\]
The support of an $\X$ of this form is defined as\footnote{There is also a definition of the support that does not depend on the formal notation, \cf \cite{RejznerFredenhagen}.}
\[
 x \in \supp \X \Leftrightarrow \exists i \text{ s.t. } x \in \pi_i \supp f_\X,
\]
where $\pi_i$ are the canonical projections $\Sigma^{k+1} \to \Sigma$.
One then defines $\V(\Sigma)$ as the space of vector fields with compact support. It is an $\F(\Sigma)$ module and a covariant functor between the categories $\CatSub$ and $\CatVec_i$ by the pushforward
\[
 \chi_*(\X)(\varphi') = \chi_* \X(\chi^* \varphi').
\]
Here the outer pushforward on the \rhs is the pushforward of $\E_c$. 
The definition of locality and microcausality of such objects can be given as a corresponding condition on the support and wavefront set of $f_\X$. We refer to \cite{RejznerFredenhagen} for details.

As in the finite dimensional case, besides vector fields one can also consider multivector fields, which are graded symmetric tensor powers of $\V(\Sigma)$. We denote the space of such objects by $\Lambda \V(\Sigma)$, where
\[
 \Lambda \V(\Sigma) = \F(\Sigma) \oplus \V(\Sigma) \oplus \mathcal{C}_{\ml}^\infty(\E(\Sigma),\Lambda^2\E_c(\Sigma)) \oplus \dots .
\]
Here $\mathcal{C}_{\ml}^\infty$ denotes the space of smooth, compactly supported, multilocal maps. The polynomials in $\Lambda^m \V(\Sigma) \doteq \mathcal{C}_{\ml}^\infty(\E(\Sigma),\Lambda^m\E_c(\Sigma))$ can be written as
\begin{equation}\label{MultiVectorFields}
\X=\int f_\X(x_1,\ldots,x_k,y_1,\ldots,y_m)\Phi_{x_1} \ldots \Phi_{x_k}\Phi^\ddagger_{y_1}\wedge \ldots \wedge \Phi^\ddagger_{y_m}  \mu(x_1) \ldots \mu(y_m).
\end{equation}
On the space of multivector fields one can introduce the Schouten bracket (usually called \emph{antibracket} in physics). This is an odd graded Poisson bracket
\[
\{\cdot,\cdot\}:\La^n\V(\Sigma)\times\La^m\V(\Sigma)\to\La^{n+m-1}\V(\Sigma),
\]
which is graded antisymmetric
and satisfies the graded Leibniz rule.
For $\X\in\La^1\V(\Sigma)$ and $F\in\La^0\V(\Sigma)$ it coincides with the action of $\X$ as a derivation,
\[
\{\X,F\}=\partial_{\X}F\,,
\]
and for $\X,\Y\in\La^1\V(M)$ it coincides with the Lie bracket
\[
\partial_{\{\X,\Y\}}=\partial_\X\partial_\Y - \partial_\Y \partial_\X.
\]
Moreover, it satisfies the graded Jacobi rule.
Formally, using the notation of (\ref{MultiVectorFields}) we can write the antibracket as
\begin{equation}\label{antibracket}
\{\X,\Y\}=-\int\!\left(\!\frac{\delta \X}{\delta\Phi(x)}\wedge\frac{\delta \Y}{\delta\Phi^\ddagger(x)}+(-1)^{|\X|}\frac{\delta \X}{\delta\Phi^\ddagger(x)}\wedge\frac{\delta \Y}{\delta\Phi(x)}\!\right) \mu(x),
\end{equation}
where $|\X|$ is the grade of $\X$. This is the notation commonly used in physics, so we will also invoke it in the present paper. Nevertheless, one should keep in mind that the antibracket is a well defined geometrical object and all the expressions can be given a precise mathematical meaning, without referring to the formal notation.

\section{The diffeomorphism symmetry}
\label{sec:Diffeo}
From the point of view of physics, only the submanifold is important, not its parametrization. Now for any smooth bijection $f: \Sigma \to \Sigma$ we get another parametrization $\varphi'$ of $\tilde \Sigma$ by $\tilde X'(x) = \tilde X(f(x))$, i.e.,
\begin{equation}
\label{eq:varphi'}
 \varphi'(x) = \lambda^{-1} \exp_{X(x)}^{-1}(\tilde X(f(x))).
\end{equation}
In particular, such a transformation $\ph \mapsto \ph'$ is a symmetry of the action. 
We consider infinitesimal transformations and assume that the corresponding vector field on $\Sigma$ is compactly supported. 
Hence, the Lie algebra of gauge symmetries is given by $\g_c(\Sigma) \doteq \Gamma^\infty_c(\Sigma, T \Sigma)$. Again, this defines a covariant functor from $\CatSub$ to $\CatVec_i$.

We want to determine the infinitesimal change in $\varphi$ corresponding to this symmetry. Thus, we first need to know the action of a vector $\xi$ on the parametrization $\tilde X(x)$.

\begin{proposition}
\label{prop:dX}
For any $\xi \in T_x \Sigma$ we have
\[
 \ud \tilde X(x)(\xi) = \eta(1),
\]
where $\eta$ is the unique Jacobi field along the geodesic
\begin{equation}
\label{eq:alpha}
 [0,1] \ni s \to \alpha(s) = \exp_{X(x)}(s \lambda \varphi(x))
\end{equation}
with initial conditions $\eta(0) = \xi$, $\nabla_s \eta(0) = \lambda \nabla_\xi \varphi(x)$. The covariant derivative on the \rhs of the second equation is the covariant derivative in $M$.
\end{proposition}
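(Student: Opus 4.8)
The plan is to realize $\ud \tilde X(x)(\xi)$ as the value at $s=1$ of the variation field of a one-parameter family of geodesics, and then to identify that variation field as the Jacobi field with the stated initial data. First I would choose a smooth curve $c$ in $\Sigma$ with $c(0) = x$ and $\dot c(0) = \xi$, and form the two-parameter map
\[
 \Gamma(t,s) = \exp_{X(c(t))}(s \lambda \varphi(c(t))),
\]
which is well defined and smooth because $M$ is geodesically convex and complete and $\varphi$ is smooth. For each fixed $t$ the curve $s \mapsto \Gamma(t,s)$ is a geodesic, and $\Gamma(0,\cdot) = \alpha$ is precisely the geodesic \eqref{eq:alpha}. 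Since $\Gamma$ is a variation through geodesics, its variation field $\eta(s) \doteq \partial_t \Gamma(t,s)|_{t=0}$ is a Jacobi field along $\alpha$; this is the standard fact that the transversal field of a geodesic variation solves the Jacobi equation.

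Next I would read off the endpoint value. At $s=1$ one has $\Gamma(t,1) = \exp_{X(c(t))}(\lambda \varphi(c(t))) = \tilde X(c(t))$, whence
\[
 \eta(1) = \left. \frac{\ud}{\ud t} \right|_{t=0} \tilde X(c(t)) = \ud \tilde X(x)(\xi),
\]
which is the asserted identity. It then remains to verify that $\eta$ carries the claimed initial data, so that it is \emph{the} Jacobi field in question (a Jacobi field is uniquely determined by its value and its covariant derivative at a single point, being the solution of a second-order linear ODE). At $s=0$ we have $\Gamma(t,0) = X(c(t))$, so $\eta(0) = \ud X(x)(\xi)$, which under the usual identification of $T_x \Sigma$ with its image $\ud X(T_x \Sigma) \subset T_{X(x)} M$ is exactly the stated condition $\eta(0) = \xi$.

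For the covariant derivative at $s=0$ I would invoke the symmetry lemma $\nabla_s \partial_t \Gamma = \nabla_t \partial_s \Gamma$ for the torsion-free Levi-Civita connection of $M$. Evaluating the $s$-derivative of $\Gamma$ at $s=0$ yields the initial velocity of the geodesic $s \mapsto \Gamma(t,s)$, namely $\partial_s \Gamma(t,0) = \lambda \varphi(c(t))$. Therefore
\[
 \nabla_s \eta(0) = \left. \nabla_t \big( \lambda \varphi(c(t)) \big) \right|_{t=0} = \lambda \nabla_\xi \varphi(x),
\]
where $\nabla_t$ is the covariant derivative along $t \mapsto X(c(t))$ in $M$, i.e. the pullback connection $X^* \nabla$ applied to the $TM$-valued section $\varphi$. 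This matches the second initial condition, and uniqueness of Jacobi fields then closes the argument. The only genuinely delicate points are notational, namely the identification $T_x \Sigma \cong \ud X(T_x \Sigma)$ implicit in writing $\eta(0) = \xi$ and the reading of $\nabla_\xi \varphi$ as the pullback-connection derivative of $\varphi$; the analytic content reduces to the geodesic-variation and symmetry lemmas, so I do not expect any serious obstacle beyond this bookkeeping.
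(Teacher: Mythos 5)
Your proposal is correct and follows essentially the same route as the paper's proof: a two-parameter variation through geodesics with moving basepoint, the geodesic-variation lemma to identify the transversal field as a Jacobi field, and the symmetry lemma $\nabla_s \del_t = \nabla_t \del_s$ to read off $\nabla_s \eta(0)$. The only cosmetic difference is that the paper takes the base curve to be a geodesic $\phi_\xi$ in $\Sigma$ rather than an arbitrary curve with velocity $\xi$, which is immaterial since only the first-order data at $t=0$ enters.
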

The proof is given in the appendix.

We now need to find out which infinitesimal change in $\varphi(x)$ leads to the above infinitesimal change in $\tilde X(x)$. Thus, we want to determine $\delta_\xi \varphi(x) \in T_x M \simeq T_{X(x)} M$ such that
\[
 \ud \exp_{X(x)}(\lambda \varphi(x)) (\lambda \delta_\xi \varphi(x)) = \ud \tilde X(x)(\lambda \xi).
\]
With the usual identification $T_\eta T_{X(x)} M \simeq T_{X(x)} M$, the differential of the exponential on the \lhs is a map $T_{X(x)} M \to T_{\tilde X(x)} M$.
By \cite[Thm.~IX.3.1]{LangDiffGeo}, we have $\delta_\xi \varphi(x) = \nabla_s \eta(0)$, where $\eta$ is the unique Jacobi field along the geodesic defined in \eqref{eq:alpha}, satisfying the conditions
\begin{align*}
 \eta(0) & = 0, & \eta(1) & = \ud \tilde X(x)(\xi).
\end{align*}
We thus set
\[
 (\rho_{\Sigma}(\xi) \varphi)(x) \doteq \delta_\xi \varphi(x) = \nabla_s \eta(0).
\]
\begin{remark}
\label{rem:GeodesicCompleteness}
In order for the Jacobi field $\eta$ to exist and to be unique, one has to rely on the geodesic convexity and completeness of $M$. In the perturbative setup, one can drop this assumption, as by choosing $\lambda$ or $\ph$ small enough, one can always achieve the well-definedness of $\eta$. Expanding the thus obtained $\delta_\xi \varphi(x)$ in $\lambda$, one obtains a series depending only on the local geometric data at $x$ and $X(x)$. This series then makes sense, as a formal power series, for all configurations $\ph$.
\end{remark}

The above variation is a vector field, $\rho_{\Sigma}(\xi) \in \V(\Sigma)$, so there is a natural action $\del_{\rho_{\Sigma}(\xi)}$ on $\F(\Sigma)$. A natural structure associated with an action of a Lie algebra on a vector space is the Chevalley--Eilenberg complex. It characterizes the space of invariants by its 0-order cohomology.
In our case it is defined as
\[
 \CE(\Sigma) \doteq \mathcal{C}^\infty_\ml(\E(\Sigma),\La\g'(\Sigma)),
\]
where $\g(\Sigma) \doteq \Gamma^\infty(\Sigma, T\Sigma)$,  $\g'(\Sigma)$ is its dual, and $\Lambda$ again denotes the antisymmetric tensor powers. 
As in the previous section, we may write the polynomial elements of $\CE(\Sigma)$ in the form
\begin{equation}
\label{eq:F_CE}
F = \int f_F(x_1, \dots, x_l,y_1,\ldots,y_k) \Phi_{x_1} \dots \Phi_{x_l}C_{y_1}\wedge \dots \wedge C_{y_k}  \mu(x_1) \dots  \mu(y_k),
\end{equation}
where the $C_{x_i}$'s  are the evaluation functionals on $\g(\Sigma)$, i.e., $C_{x_i}^\mu(c)=c^\mu(x_i)$, for $c\in\g(\Sigma)$. These are called \emph{ghosts} in the physics literature. Here we keep the convention that field configurations are denoted by small letters ($\ph$, $c$), and evaluation functionals by capital ones. The condition of locality and compact support can now easily be formulated for the distribution $f_F$, exactly as for the polynomials of the form (\ref{polynom}).

Let $F \in  \CE^k(\Sigma)\doteq\mathcal{C}^\infty_\ml(\E(\Sigma),\La^k\g'(\Sigma))$, $\ph\in\E(\Sigma)$, and $\xi_0, \dots, \xi_k\in\g(\Sigma)$. On $\CE(\Sigma)$, one introduces a differential $\gamma_\Sigma: \CE^k(\Sigma)\rightarrow \CE^{k+1}(\Sigma)$ by
\begin{align*}
 (\gamma_{\Sigma} F)(\ph;\xi_0, \dots, \xi_k) & = \sum_{i = 0}^k (-1)^i (\del_{\rho_{\Sigma}(\xi_i)} F( \cdot ; \xi_0, \dots, \hat{\xi_i}, \dots, \xi_k))(\ph) \\
 & + \sum_{i < j} (-1)^{i+j} F(\ph;[\xi_i, \xi_j], \dots, \hat{\xi_i}, \dots, \hat{\xi_j}, \dots \xi_k),
\end{align*}
where the hat denotes omission.
This means that on elements of the form \eqref{eq:F_CE}, $\gamma_\Sigma$ acts as a graded left differential, whose action on the evaluation functionals is given by
\begin{align*}
 \gamma_\Sigma \Phi & = \delta_C \Phi, & \gamma_\Sigma C^\mu & = C^\lambda \nabla_\lambda C^\mu.
\end{align*}
The important feature is that if $F\in \F(\Sigma)$ is invariant under all the symmetry transformations $\del_{\rho_{\Sigma}(\xi)}$, $\xi\in\mathfrak{g}(\Sigma)$, then $\gamma_{\Sigma}F=0$, so the space of gauge invariant functionals is recovered as: $\F^{\textrm{inv}}(\Sigma)=H^0(\CE(\Sigma),\gamma_\Sigma)$.

Note that the assignment of $\g(\Sigma)$ to $\Sigma$ is a contravariant functor between $\CatSub$ and $\CatVec$ given by
\[
 (\chi^* \xi')(f) = \chi^* \xi'(\chi_* f)
\]
for $f \in \mathfrak{D}$. One can then see $\CE(\Sigma)$ as a covariant functor from $\CatSub$ to $\CatdgA$.

We started in the off-shell formalism (configurations are not required to satisfy the equations of motion), so  in the next step we would like to find a characterization of on-shell functionals. In this context, the BV complex is a very useful structure, since it allows to have a control of such quantities. Our main objective is to characterize the space of invariant on-shell functionals. Let $\F_S(\Sigma)$ denote the space of on-shell functionals. This is given by the quotient  $\F_S(\Sigma)=\F(\Sigma)/\F_0(\Sigma)$, where $\F_0(\Sigma)$ is the ideal generated by the equations of motion, i.e., by the elements $\del_\X S$, where
\[
(\partial_\X S)(\varphi) \doteq \skal{L(f)^{(1)}(\varphi)}{\X(\varphi)}, \qquad \X\in\V(\Sigma),
\]
with $f\equiv 1$ on the support of $\X$.
Let $\CE_S(\Sigma)$ denote the Chevalley-Eilenberg complex constructed for $\F_S(\Sigma)$ instead of $\F(\Sigma)$. Using the standard construction of the BV-complex \cite{RejznerFredenhagen} we obtain the resolution of $\CE_S(\Sigma)$ as
\begin{equation}
\label{eq:BV}
 \BV(\Sigma) = \mathcal{C}^\infty_{\textrm{ml}}(\E(\Sigma),\Lambda\E_c(\Sigma)\widehat{\otimes}S \mathfrak{g}_c(\Sigma)\widehat{\otimes}\Lambda\mathfrak{g}'(\Sigma)).
\end{equation}
Here $\Lambda$ denotes the antisymmetric, $S$ the symmetric tensor product, and $\widehat{\otimes}$ is the completed tensor product (for details on the topologies see \cite{RejznerFredenhagen}). The new ingredient here is  $\g_c(\Sigma)$, the space of \emph{antifields of ghosts}. It appears because $\BV(\Sigma) $ has to contain symmetric tensor powers of compactly supported derivations of the space of ghosts $\g(\Sigma)$.
Formally, a polynomial element of $\BV(\Sigma)$ can be written as
\begin{multline}
\label{Polynom2}
F=\int f_F(x_1, \dots ,y_{m})\Phi_{x_1} \dots \Phi_{x_j} C_{x_{j+1}}\wedge \dots \wedge C_{x_{k}}\Phi^\ddagger_{y_1}\wedge \dots \wedge \Phi^\ddagger_{y_l}\\ \times C^\ddagger_{y_{l+1}} \dots C^\ddagger_{y_{m}} \mu(x_1) \ldots \mu(y_{m}),
\end{multline}
where we used the notation $C^\ddagger_{y}\equiv\frac{\delta}{\delta C_{y}}$ for the antifields of ghosts.
The notion of locality or microcausality of $F$ reduces again to a condition on the support and wavefront set of the distribution $f_F$. For a definition which does not refer to the formal notation above, see  \cite{RejznerFredenhagen}.

The BV differential is defined as $s=\gamma+\delta$, where $\gamma$ is the Chevalley--Eilenberg differential\footnote{The Lie algebra $\g(\Sigma)$ acts on $\V(\Sigma)$ and $\g_c(\Sigma)$ via the commutator and one can define the Chevalley--Eilenberg differential in the similar way as in case of $\F(\Sigma)$.} and $\delta$ is the so called Koszul--Tate differential, which acts trivially on fields and ghosts, and on antifields it is given by
\begin{align*}
 \delta \X & \doteq \partial_\X S & \text{ for } \X \in \V(\Sigma), \\
 \delta \xi & \doteq \partial_{\rho(\xi)} & \text{ for } \xi\in\g_c(\Sigma).
\end{align*}
The grading of the Chevalley--Eilenberg complex is denoted by $\#\pg$ (pure ghost number) and the grading of the Koszul--Tate complex is called the antifield number ($\#\af=1$ for vector fields and $\#\af=2$ for the elements of $\g_c(\Sigma)$). The total grading is called the ghost number: $\#\gh=\#\pg-\#\af$.

On $\BV(\Sigma)$, like on $\Lambda\V(\Sigma)$, one can define the Schouten bracket (antibracket). 
To introduce it, note that the underlying algebra of the BV complex consists of graded symmetric (with respect to the ghost number) powers of derivations of  $\CE(\Sigma)$, i.e., $\BV(\Sigma)\subset \mathbf{S}\textrm{Der}(\CE(\Sigma))$, where  $  \mathbf{S}^0 \textrm{Der}(\CE(\Sigma))\doteq\CE(\Sigma)$. On such a space one can define the graded Schouten bracket $\{.,.\}$, which on $\textrm{Der}(\CE(\Sigma))$ coincides with the commutator, for a derivation and an element $F$ of $\CE(\Sigma)$ it is just the evaluation of the derivation on $F$ and for higher graded symmetric powers we extend it by the Leibniz rule. Formally, it can again be written as
\begin{equation}\label{antibracket2}
\{F,G\}=-\sum_\alpha\int\!\left(\!\frac{\delta F}{\delta\phi^\alpha(x)}\wedge\frac{\delta G}{\delta\phi_\alpha^\ddagger(x)}+(-1)^{\#\gh(F)}\frac{\delta F}{\delta\phi_\alpha^\ddagger(x)}\wedge\frac{\delta G}{\delta\phi^\alpha(x)}\!\right) \mu(x),
\end{equation}
where $\phi^\alpha=\Phi,C$ and  $\phi^\ddagger_\alpha=\Phi^\ddagger,C^\ddagger$. The advantage of introducing the antibracket is the fact that the BV differential can now be written as
\[
 s F = \{ L_{\Sigma}^\mathrm{ext}(f), F \},\qquad F\in\BV(\Sigma),
\]
where $f$ is chosen such that $f \equiv 1$ on $\supp F$, and $L^\mathrm{ext}$ is the so-called extended Lagrangian, which can again be  understood as a natural transformation between the functors $\D$ and $\BV$ (see \cite{RejznerFredenhagen}). Formally, it can be written as
\[
 L^{\mathrm{ext}}_{\Sigma}(f)(\varphi) = \lambda^{-2} \int f \frac{\sqrt{ - \tilde g}}{\sqrt{ - g }} \mu - \int f \delta_C \Phi^a \frac{\delta}{\delta \Phi^a} \mu - \frac{1}{2} \int f [C,C]^\mu \frac{\delta}{\delta C^\mu} \mu.
\]

Finally, one extends $\BV(\Sigma)$ to $\BV_\mc(\Sigma)$. This proceeds analogously to the definition of $\F_\mc(\Sigma)$, i.e., by putting the H\"ormander topology on the spaces of distributions $f_F$ occurring in \eqref{Polynom2}. Similarly, one defines the local elements as those where $f_F$ is supported on the diagonal with wave front set orthogonal to $T \Delta^{k+m}(\Sigma)$.

\section{Gauge fixing}
\label{sec:GaugeFixing}

We call the space $\Lambda \E_c(\Sigma) \hat \otimes S \g_c(\Sigma) \hat \otimes \Lambda \g'(\Sigma)$ appearing in the definition \eqref{eq:BV} of the BV complex the \emph{minimal sector}. In order to do the gauge fixing, we have to extend it by the \emph{nonminimal} sector, given by
Lagrange multipliers (elements of $S \g'(\Sigma)$, also called Nakanishi--Lautrup fields), antighosts (elements of $\Lambda \g'(\Sigma)$) and their antifields (corresponding derivations). 
Let $B^\mu_x$ denote the evaluation functional in $\g'(\Sigma)$, and $\bar{C}^\mu_x$ the evaluation functional in $\g'(\Sigma)$.
Formally, we can write the elements of $S\g'(\Sigma)$ as
\[
 F = \int f(x_1, \dots, x_k) B_{x_1}\otimes_S \dots \otimes_S B_{x_k} \mu(x_1) \dots \mu(x_k),
\]
where $\otimes_S$ stands for the symmetric tensor product, and elements of $\Lambda \g'(\Sigma)$ are represented by
\[
 G = \int f(x_1, \dots, x_k) \bar C_{x_1}\wedge \dots\wedge \bar C_{x_k} \mu(x_1) \dots \mu(x_k).
\]

The BV differential is extended to the nonminimal sector by setting $s \bar C=iB$ and $s B=0$. This definition ensures that antighosts and Lagrange multipliers form trivial pairs, so they do not modify the cohomology of $s$. The corresponding graded tensor powers of derivations are elements of $S \g_c$ (the antifields of antighosts) and $\Lambda \g_c$ (the antifields of the Lagrange multipliers). Their generators are denoted by $\frac{\delta}{\delta \bar C_x} = \bar C^\ddagger_x$ and $\frac{\delta}{\delta B_x} = B^\ddagger_x$.
The non-minimally extended BV-complex is now given by
\[
 \BV^\nm = \mathcal{C}^\infty_\ml(\E,(\textrm{Minimal sector})\hat \otimes \Lambda \g' \hat \otimes S \g' \hat \otimes S \g_c \hat \otimes \Lambda \g_c).
\]
The space $\BV^\nm(\Sigma)$ can again be equipped with the Schouten bracket (antibracket), which formally can be written as (\ref{antibracket2}), now with $\phi^\alpha=\Phi,C,\bar C, B$ and  $\phi^\ddagger_\alpha=\Phi^\ddagger,C^\ddagger,\bar C^\ddagger, B^\ddagger$.
The BV differential $s$ can be written locally as an antibracket with the extended action $S^{\mathrm{ext}}$, where now
\begin{multline*}
 L^{\mathrm{ext}}_{\Sigma}(f)(\varphi) = \lambda^{-2} \int f \frac{\sqrt{ - \tilde g}}{\sqrt{ - g }}  \mu - \int f \delta_C \varphi^a \frac{\delta}{\delta \Phi^a} \mu \\
 - \frac{1}{2} \int f [C,C]^\mu \frac{\delta}{\delta C^\mu}  \mu - i \int f B^\mu \frac{\delta}{\delta \bar C^\mu}  \mu\,.
\end{multline*}

The benefit of the extension by the nonminimal sector is that one can now perform a gauge fixing. Assume that a set of local constraints $T^\alpha_x(\varphi(x)) = 0$ leads to a hyperbolic equation of motion for $\varphi$. One defines the corresponding \emph{gauge fixing fermion}\footnote{Sometimes a term $\int \frac{\alpha}{2} \bar C_\mu B^\mu \mu$ is added. Omitting it corresponds to the Landau gauge in Yang--Mills theories.} 
\[
 \Psi(f)(\varphi) = \int f \bar C_\alpha T^\alpha(\varphi) \mu,
\]
and performs the canonical transformation
\[
 L^\Psi_\Sigma(f) = \sum_{k = 0}^\infty \frac{1}{k!} \{ \dots \{ L^\mathrm{ext}_\Sigma(f), \underbrace{ \Psi(f') \}, \dots , \Psi(f')}_k \},
\]
with $f' \equiv 1$ on $\supp f$, so that the \rhs does not depend on the choice of $f'$, by locality. In the present case, $L^\mathrm{ext}$ is of first order in the antifields, so the series truncates after $k = 1$. Considering only the part of the Lagrangean with $\#\af = 0$, and linearizing, one obtains a hyperbolic wave operator $S_0''$.

The next goal is to identify a suitable set of constraints, i.e., a gauge fixing fermion. For simplicity, we will from now on restrict the target space $M$ to be the $n$-dimensional Minkowski space. In Cartesian coordinates, the infinitesimal change under reparametrizations is then given by
\[
 \delta_\xi \varphi^a = \xi^\mu \ud \tilde X^a_\mu = \xi^\mu \left( \ud X^a_\mu + \lambda \del_\mu \varphi^a \right),
\]
for $\xi \in \g$. Furthermore, the induced metric is
\[
 \tilde g_{\mu \nu} = g_{\mu \nu} + \lambda \ud X^a_\mu h_{ab} \del_\nu \varphi^b + \lambda \ud X^a_\nu h_{ab} \del_\mu \varphi^b + \lambda^2 \del_\mu \varphi^a h_{ab} \del_\nu \varphi^b. 
\]
For the equation of motion stemming from the Nambu--Goto Lagrangean \eqref{eq:Nambu-Goto} we obtain
\begin{equation}
\label{eq:eom0}
 0 = \del_\mu \left( \tilde g^{\mu \nu} \sqrt{- \tilde g} \ud \tilde X_\nu^a \right) = \nabla_\mu \left( \tilde g^{\mu \nu} \sqrt{- \tilde g} \ud \tilde X_\nu^a \right).
\end{equation}
The second equation follows from $\del_\mu f^\mu = \nabla_\mu f^\mu$, which is true for any densitized vector $f$.

We recapitulate the ingredients introduced so far in the following table:
\begin{center}
\begin{tabular}{|c||c|c|c|c|c|}
\hline
 & $\#\af$ & $\#\pg$ & $\#\gh$ & action of $\gamma$ & action of $\delta$ \\
\hline
\hline
 $\Phi^a$ & 0 & 0 & 0 & $C^\mu \ud \tilde X^a_\mu$ & 0 \\
 $C^\mu$ & 0 & 1 & 1 & $C^\lambda \nabla_\lambda C^\mu$ & 0 \\
 $\bar C^\mu$ & 0 & -1 & -1 & $iB^\mu$ & 0 \\
 $B^\mu$ & 0 & 0 & 0 & 0 & 0 \\
\hline
 $\Phi^\ddagger_a$ & 1 & 0 & -1 & $\nabla_\mu (C^\mu \Phi^\ddagger_a)$  & $- h_{ab} \nabla_\mu (\tilde g^{\mu \nu} \sqrt{-\tilde g} \ud \tilde X^b_\nu)$ \\ 
 $C^\ddagger_\mu$ & 2 & 0 & -2 &  $\nabla_\lambda( C^\lambda C^\ddagger_\mu) + \nabla_\mu C^\lambda C^\ddagger_\lambda$ & $\ud \tilde X^a_\mu \Phi^\ddagger_a$ \\
 $\bar C^\ddagger_\mu$ & 0 & 0 & 0 & 0 & 0 \\
 $B^\ddagger_\mu$ & 1 & 0 & -1 & 0 & $- i \bar C^\ddagger_\mu$ \\
\hline
\end{tabular}
\end{center}
Note that these explicit formulas are valid only on Minkowski space, and that $\tilde X$ and the derived expressions $\tilde g^{\mu \nu}$ and $\tilde g$ are obtained by replacing $\varphi^a$ with the evaluation functional $\Phi^a$.

The equation of motion \eqref{eq:eom0} is not hyperbolic. A natural gauge condition is
\begin{equation}
\label{eq:harmonicGauge}
 \nabla_\mu  \left( \tilde g^{\mu \nu} \sqrt{- \tilde g} \right) = 0,
\end{equation}
which corresponds to the harmonic background gauge used in linearized gravity. Note that we use the covariant derivative $\nabla_\mu$ \wrt the background metric $g$. This is done in order to ensure covariance. In this gauge, one has the equation of motion
\[
 \tilde g^{\mu \nu} \nabla_\mu \ud \tilde X_\nu^a = 0.
\]

We also remark that \eqref{eq:harmonicGauge} is a vector condition, in accordance to choosing Lagrange multipliers from $\g'$.
It is well known that the harmonic gauge can always be achieved, at least locally \cite{WaldGR}.

The gauge fixing fermion for the gauge condition \eqref{eq:harmonicGauge} is
\[
 \Psi(f)(\varphi) = - i \lambda^{-2} \int f \bar C_{\nu} \frac{1}{\sqrt{-g}} \nabla_\mu \left( \tilde g^{\mu \nu} \sqrt{- \tilde g} \right) \mu.
\]
Note that $\bar C$, and hence also $B$, has then the dimension of a length.
For the gauge fixed Lagrangean at antifield number 0, one thus obtains
\begin{align*}
 L^\Psi_{\Sigma}(f)(\ph) & = \lambda^{-2} \int f  \frac{\sqrt{-\tilde g}}{\sqrt{-g}} \mu + \lambda^{-2} \int f B_{\nu} \frac{1}{\sqrt{-g}} \nabla_\mu \left( \tilde g^{\mu \nu} \sqrt{- \tilde g} \right) \mu \\
 & + i \lambda^{-2} \int f \nabla_{(\nu} \bar C_{\mu)} \left( 2 \sqrt{-\tilde g} \tilde g^{\lambda \nu} \nabla_\lambda C^\mu - \nabla_\lambda \left( \tilde g^{\mu \nu} \sqrt{-\tilde g} C^\lambda \right) \right) \frac{1}{\sqrt{-g}} \mu.
\end{align*}
Note that this is consistent with $C$ having the dimension of a length. For the perturbative expansion, we replace $C, \bar C$, and $B$ by $\lambda C$, $\lambda \bar C$, and $\lambda B$. This entails that now the mass dimension of these fields is $d/2-1$.
Expanding the Lagrangean in $\lambda$, and taking into account that the background is on-shell, we obtain a constant term of $\order(\lambda^{-2})$, a vanishing term at $\order(\lambda^{-1})$, and a free Lagrangean at $\order(\lambda^0)$, whose antifield-independent part is
\begin{subequations}
\label{eq:L_0}
\begin{align}
\label{eq:L_0_B}
 L^\Psi_{0, \Sigma}(f) & = \int f \tfrac{1}{2} (Q \Phi)^a \left( h_{ab} \nabla^\mu \del_\mu - 2 h_{ac} h_{bd} \nabla^\nu \ud X^c_\mu \ud X^d_\nu \del^\mu \right) (Q \Phi)^b  \mu \\
\label{eq:L_0_C}
 & - \int f \left[  B^\nu \ud X^a_\nu h_{ab} \nabla^\mu \del_\mu \Phi^b + i \bar C_\mu \nabla_\nu \nabla^\nu C^\mu + i \bar C_\mu R^\mu_\nu C^\nu  \right] \mu,
\end{align}
where $Q$ projects on the normal bundle $N_\Sigma M$,
\begin{align*}
P^a_b & \doteq \ud X^a_\mu g^{\mu \nu} \ud X^c_\nu h_{bc} & Q & = \id - P.
\end{align*}
The corresponding free BRST transformation $\gamma_0$ is then
\begin{align*}
 \gamma_0 \Phi^\alpha & = C^\mu \ud X^a_\mu, \\
 \gamma_0 C^\mu & = 0, \\
 \gamma_0 \bar C^\mu & = i B^\mu, \\
 \gamma_0 B^\mu & = 0.
\end{align*}
\end{subequations}
The wave operator $S_0''$ corresponding to the free action \eqref{eq:L_0} is block diagonal \wrt the ghost sector $(C^\mu, \bar C_\nu)$ and the sector $(B_\mu, (P \Phi)^a, (Q \Phi)^a)$. In the latter, it is of the form
\begin{equation}
\label{eq:S_0''}
 S_0'' = - \begin{pmatrix}
 0 & \ud X^c_\mu \nabla^\nu \del_\nu h_{cd} P^d_b  & \ud X^c_\mu \nabla^\nu \del_\nu h_{cd} Q^d_b \\
 & 0 & 0 \\
 & & O^a_b
 \end{pmatrix},
\end{equation}
where $O^a_b$ abbreviates
\[
 O^a_b = Q^a_c (\delta^{c}_{e} \nabla^\nu \del_\nu - 2 (\nabla^\rho \ud X^c_\lambda) \ud X^d_\rho h_{de} \del^\lambda) Q^e_b.
\]
We omitted the lower triangular part for simplicity. The principal part of this operator is
\[
 - \begin{pmatrix}
 0 & \ud X^c_\mu  h_{cb}  & 0 \\
 \ud X^a_\nu & 0 & 0 \\
 0 & 0 & \delta^a_b
 \end{pmatrix} \nabla^\lambda \del_\lambda,
\] 
so up to the multiplication with an invertible operator, the principal symbol coincides with the induced metric. The same holds for the ghost sector. It follows that, up to an invertible linear transformation, the wave operator $S_0''$ is normally hyperbolic \cite{BGP07}, and hence has a well-posed Cauchy problem.

\begin{remark}
\label{rem:bc}
We may now relax the condition of global hyperbolicity of $\Sigma$ by the weaker assumption that, given suitable boundary conditions at  timelike (i.e., not spacelike or null) boundaries, the wave operator $S_0''$ has unique retarded and advanced propagators. The boundary conditions should be linear in the configurations and compatible with the free BRST transformation $\gamma_0$ in the sense that it maps configurations fulfilling them to configurations which also satisfy them. The category morphisms are required to respect the boundary conditions, i.e., if $\Sigma \subset \Sigma'$, and $\Sigma$ and $\Sigma'$ share a timelike boundary, then there is a morphism $\chi: \Sigma \to \Sigma'$ only if the boundary conditions coincide at the common timelike boundary.
\end{remark}

\begin{remark}
\label{rem:Kleinert}
An alternative gauge condition, proposed in \cite{Kleinert}, would be to allow only for transversal fluctuations, i.e., to require $P \ph = 0$. This gauge condition does not strictly fit into the present framework, as the resulting wave operator is not normally hyperbolic, being of $0$th order for $P \ph$ and the auxiliary fields. Nevertheless, it is invertible. Dividing out the free equations of motion, this boils down to setting $P \ph$ and the auxiliary fields to $0$. For a plane background, this leads to the L\"uscher--Weisz string. A problem with this gauge is that it can in general not be achieved, even locally. We plan to address the relation between the two gauges in the future.
\end{remark}

\subsection{The Peierls bracket}
\label{sec:Peierls}
The free Peierls bracket is defined via the fundamental solution $\Delta$ corresponding to the wave operator $S_0''$, i.e., the difference of retarded and advanced propagator, by
\begin{equation*}
\Pei{F}{G} \doteq \sum_{\al,\beta}(-1)^{(1+ \betrag{F}) \betrag{\phi^\al} } \skal{\frac{\delta F}{\delta\phi^\al}}{{\De}^{\al\beta}\frac{\delta G}{\delta\phi^\beta}},
\end{equation*}
where $\phi^\al = \Phi, C, \bar C, B$, $|F| \doteq \#\gh(F)$ and both $F$ and $G$ have $\#\af=0$.\footnote{The Peierls bracket acts trivially on antifields in the sense that the part $\ph^\ddagger_\alpha$ of $\X = \int f_{\X}^\alpha \ph^\ddagger_\alpha \mu$ commutes through the bracket, which then only acts on $f_\X$.}

From the block-diagonality of $S_0''$ it follows that the Peierls brackets $\Pei{C}{B}$, $\Pei{\bar C}{B}$, $\Pei{B}{B}$, $\Pei{C}{\varphi}$, and $\Pei{\bar C}{\varphi}$ vanish. Furthermore,
\begin{equation*}
 \Pei{C^\mu(x)}{\bar C^\nu(y)} = - i \Delta^{\mu \nu} (x,y),
\end{equation*}
where $\Delta^{\mu \nu}$ is the fundamental solution for the wave operator $-\nabla^\mu \nabla_\mu - \mathrm{Ric}$ on vector fields. 
The fundamental solution corresponding to the wave operator \eqref{eq:S_0''} in the $(B, P\Phi, Q\Phi)$ sector is of the form
\begin{equation}
\label{eq:Delta}
 \Delta = \begin{pmatrix} 0 & \Delta_{\mu b} & 0 \\
 \Delta^{a \nu} & \tilde \Delta^a_b & \hat \Delta^a_b \\
 0 & \check \Delta^a_b& \bar \Delta^a_b \end{pmatrix}.
\end{equation}
Closer inspection of $S_0''$ shows that $\Delta^{a \nu}(x,y) = \ud X^a_\mu(x) \Delta^{\mu \nu}(x,y)$, and correspondingly for $\Delta_{\mu b}$. Furthermore, $\bar \Delta$ is the fundamental solution to the normally hyperbolic operator $O$ on $N \Sigma$. For our purposes, we do not have to explicitly determine $\tilde \Delta$, $\hat \Delta$, and $\check \Delta$.

\subsection{The BRST current}
For later reference, we compute the free BRST current. It is defined as
\begin{multline*}
 j_0^\mu = \sum_\al  \left[ \gamma_0 \phi^\al \frac{\delta}{\delta \nabla_\mu \phi^\al} L_0 + 2 \nabla_\nu \gamma_0 \phi^\al \frac{\delta}{\delta \nabla_\mu \nabla_\nu \phi^\al} L_0 \right. \\ \left. - \nabla_\nu \left( \gamma_0 \phi^\al \frac{\delta}{\delta \nabla_\mu \nabla_\nu \phi^\al} L_0 \right) \right] - J_0^\mu,
\end{multline*}
where $J^\mu_0$ is the divergence term and $\phi^\al = \Phi, C, \bar C, B$. First of all, we note that the current corresponding to the Nambu--Goto Lagrangean \eqref{eq:Nambu-Goto} vanishes (it is cancelled by $J_0^\mu$). The same is thus true for the terms in the free Lagrangean $L_0$ that stem from its expansion. Hence, we only need to consider the terms involving the auxiliary fields. For the action of $\gamma_0$ on these, we obtain
\[
 \gamma_0 L_{0, \text{auxiliary part}} = - \nabla_\mu \left( B_\nu \nabla^\mu C^\nu + B_\nu \nabla^\nu C^\mu - B^\mu \nabla_\nu C^\nu \right)
\]
and thus
\[
 J_0^\mu = - B_\nu \nabla^\mu C^\nu - B_\nu \nabla^\nu C^\mu + B^\mu \nabla_\nu C^\nu.
\]
With the above, we obtain
\begin{equation}
\label{eq:current}
 j_0^\mu = \nabla^\mu B_\nu C^\nu - B_\nu \nabla^\mu C^\nu.
\end{equation}
Obviously, it is covariantly conserved, up to terms of $\order(\lambda)$.
As no renormalization ambiguities in the Wick powers $BC$ occur (by the discussion in the previous subsection, $B$ and $C$ commute), the same will be true in the quantized theory.

\section{Fields}
\label{sec:Fields}
Let us now discuss possible observables for the model. A natural class of observables would be the following: Take $f \in C_c^\infty(M)$ and define
\begin{equation}
\label{eq:Observable}
 \Psi_{\Sigma}(f)(\varphi) = \int f(\tilde X(x)) \frac{\sqrt{-\tilde g}}{\sqrt{-g}} \mu(x).
\end{equation}
By construction, this is invariant under reparametrizations.
In a nonperturbative setting, it would even be invariant under arbitrary changes in the background $\Sigma$. For $d=1$, this gives the eigentime the particle spends in the region described by $f$.
As we are in a perturbative setting, \eqref{eq:Observable} should rather be written as\footnote{For simplicity, we here assume that $M$ is flat.}
\begin{equation}
\label{eq:ObservableExpanded}
 \Psi_{\Sigma}(f)(\varphi) = \sum_{k=0}^\infty \frac{\lambda^k}{k!} \int \ph^{a_1}(x) \dots \ph^{a_k}(x) (\del_{a_1} \dots \del_{a_k} f)(X(x)) \frac{\sqrt{-\tilde g}}{\sqrt{-g}} \mu(x).
\end{equation}
In order for this to be in $\mathfrak{F}(\Sigma)$, we have to require that $\Sigma \cap \supp f$ is compact. One can eliminate the test function on the target space, replacing it by a symmetric test section $t \in \Gamma_c^\infty(\Sigma, T^\otimes M)$, with
\[
 T^\otimes M = \bigoplus_{k} \underbrace{T^* M \otimes \dots \otimes T^* M}_k,
\]
by substituting the component $t_{a_1 \dots a_k}(x)$ of rank $k$ for $(\del_{a_1} \dots \del_{a_k} f)(X(x))$ in \eqref{eq:ObservableExpanded}. One only has to impose the following consistency condition between the components of different rank:
\begin{equation}
\label{eq:tRelation}
 \del_\mu t_{a_1 \dots a_k} = \ud X^a_\mu t_{a a_1 \dots a_k}.
\end{equation}
It is straightforward to check that the functionals obtained in this way are gauge invariant. Hence, there is a large class of localized observables, despite the fact that the model has diffeomorphism invariance. Denoting by $\Tens_c'(\Sigma)$ the vector space of test tensors described above, the assignment $\Sigma \mapsto \Tens_c'(\Sigma)$ is a covariant functor from $\CatSub$ to $\CatVec_i$, analogously to $\E_c$ and $\D$. The assignment of a functional $\Psi(t)$ to a test tensor $t$ then defines a natural transformation between $\Tens_c'$ and $\BV_\loc^\nm$, and thus a field in the sense defined in \cite{BrunettiFredenhagenVerch}.


One can also be more general and define observables analogously to the general relativity case \cite{RejznerFredenhagen}, where one assumes that the test tensors also transform under gauge transformations.
One defines fields as natural transformations from tensor powers of tensor test sections to the BV-complex. Here, it is natural to use the covariant functor $\Tens_c$ between $\CatSub$ and $\CatVec_i$, which associates to each submanifold $\Sigma$ the vector space $\Tens_c(\Sigma) = \Gamma^\infty_c(\Sigma, T^\otimes M \otimes T^\otimes \Sigma)$.
Gauge transformations then also act on the test section, by an action $\rho$ of $\mathfrak{g}$ on $\Tens_c$, which acts by the Lie derivative on the $T^\otimes \Sigma$ indices and by the covariant derivative of $M$ on the $T^\otimes M$ indices.
Fields are now defined as
\[
 Fld = \bigoplus_{k = 0}^\infty \mathrm{Nat}(\mathfrak{Tens}_c^k, \BV^\nm).
\]
The algebra of physical fields is then given by
\begin{equation}
\label{eq:physFields}
 Fld_{\mathrm{ph}} = H^0(Fld, s),
\end{equation}
where $s$ acts on elements of $Fld$ by
\begin{equation}
\label{eq:gammaFields}
 (s \Phi)_{\Sigma}(t) = s \Phi_{\Sigma}(t) + (-1)^\betrag{\Phi} \Phi_{\Sigma}(\rho_{\Sigma}(\cdot) t).
\end{equation}
In the first term on the r.h.s., $s$ is the action on functionals given by the action of $\gamma$ and $\delta$ on $\Phi, C, \bar C, B$ specified in the table in Section~\ref{sec:GaugeFixing} for a flat target space. The placeholder in the second term on the \rhs takes an elements of $\Gamma^\infty(\Sigma, T \Sigma)$, i.e., it is a ghost. As discussed in \cite{RejznerFredenhagen}, this can be obtained as the Chevalley--Eilenberg differential of an action of $\g$ on $Fld$. 
An example for an observable in $Fld_\mathrm{ph}$ would be, for $t \in \Gamma_c^\infty(\Sigma, T^* M)$,
\begin{equation}
\label{eq:yField}
 \Phi_\Sigma(t)(\varphi) = \int t_a \tilde X^a \frac{\sqrt{-\tilde g}}{\sqrt{-g}} \mu.
\end{equation}
Note that due to the presence of $\sqrt{-\tilde g}$, this is not linear in $\ph$. The absence of linear fields is a phenomenon already encountered in general relativity \cite{RejznerFredenhagen}.
Also note that in principle one should consider equivalence classes of $t$ corresponding to the equations of motion, but due to the nonlinearity, these are hard to characterize directly.

The physical fields do not correspond to the positive definite subspace in typical representations. This can be fixed if we restrict to observables that respect the perturbative expansion in the following sense: For a given background $\Sigma$, expand $\Phi_\Sigma$ in powers of $\lambda$,
\[
 \Phi_\Sigma = \sum_{k = 0}^\infty \lambda^{k} \Phi_{\Sigma, k},
\]
where $\Phi$ is a physical field.
Given a test tensor $t$, denote by $p_t$ the lowest integer such that $\Phi_{\Sigma, p_t}(t) \neq 0$. The expression $\rho_\Sigma(\cdot)$ in the second term on the \rhs of \eqref{eq:gammaFields} is of $\order(\lambda)$, due to the additional ghost, so we require that $\Phi_{\Sigma}(\rho_\Sigma(\cdot) t)$ is of $\order(\lambda^{p_t+1})$. An example where this is violated is the field \eqref{eq:yField} on the flat background \eqref{eq:flatBackground} with $t_a \neq 0$ for $a\in \{0,1\}$ and $\int t_a X^a \mu = 0$. Due to the latter condition, $p_t = 1$, but $\Phi_{\Sigma}(\rho_\Sigma(\cdot) t)$ is of $\order(\lambda)$. Assuming that the condition is fulfilled, we have
\[
 (s \Phi)_\Sigma(t) = \lambda^{p_t} s_0 \Phi_{\Sigma, p_t}(t) + \order(\lambda^{p_t+1}),
\]
where $s_0$ is the free part of the BV differential $s$. It follows that $s_0 \Phi_{\Sigma, p_t}(t) = 0$, so $\Phi_{\Sigma, p_t}(t)$ is, at the linearized level, a physical observable in the usual sense. Thus, in a representation in which the cohomology of the BV operator defines a positive definite subspace, the positivity (in the sense of formal power series in $\lambda$) is ensured.
In the flat background case \eqref{eq:flatBackground}, examples of such observables are \eqref{eq:yField} with $t$ being normal to $\Sigma$.
As will be shown below, the physical excitations are precisely the transversal ones, so there is a one--to--one correspondence between these and the described observables.

Certainly, the relation between the physical fields as defined by \eqref{eq:physFields} and the physical excitations as defined by the cohomology of $s$ on a fixed background deserves a more detailed study. Preliminary results suggest that it is possible to define a modified BV operator whose cohomology, together with a condition on the perturbative expansion, defines physical states that are compatible with the physical fields as defined here\footnote{Klaus Fredenhagen and Katarzyna Rejzner, work in progress.}.

\section{Quantization}
\label{sec:Quantization}
\subsection{General structure}
We quantize the theory in the sense of deformation quantization using ideas from causal perturbation theory. Therefore, we will first quantize the free (linearized) theory and then define the interacting fields by the formula of Bogoliubov \cite{BS}. This approach, called perturbative algebraic quantum field theory, is presented in \cite{BDF09} and references therein, on the example of the scalar field. Recently, also Yang--Mills theory \cite{HollandsYM} and general gauge theories including gravity \cite{RejznerFredenhagenQuantization,Rej11} were treated in this framework. 

To illustrate the main ideas of this approach we consider first the space $\BV^\nm_\reg(\Sigma)$ of regular elements of the nonminimally extended BV complex.  Here we call an element of the form \eqref{Polynom2} regular if $f_F$ is a smooth compactly supported section. The space of regular gauge invariant on-shell functionals is the 0-th cohomology of the BV differential on $\BV^\nm_\reg(\Sigma)$. Moreover $\BV^\nm_\reg(\Sigma)$ forms 
a Poisson algebra with the pointwise product $m:  F\otimes  G  \mapsto  F\cdot G$ and with the Peierls bracket $\Pei{\cdot}{\cdot}$ introduced in Section~\ref{sec:Peierls} as the Poisson bracket.
In the framework of deformation quantization \cite{BDF09}, the observables of the quantized theory are constructed as formal power series in $\hbar$ with coefficients in $\BV^\nm_{\reg}(\Sigma)$. On these, one defines an associative noncommutative product $\star$ such that for $\hbar\rightarrow  0$, $F \star G \rightarrow F \cdot G$ and $[F,G]_\star/i\hbar\rightarrow \Pei{F}{G}$. 

For the Poisson algebra of functions on a finite dimensional Poisson manifold, the deformation quantization exists in the sense of formal power series due to a theorem of Kontsevich \cite{Kontsevich}. In field theory, the formulas of Kontsevich lead to ill defined terms, and a general solution of the problem is not known. Nevertheless, for the quadratic part of the action one can define the $\star$-product directly by means of
\begin{equation*}
F\star G\doteq m\circ \exp({i\hbar \DCp})(F\otimes G),
\end{equation*}
where  $\DCp$ is the functional differential operator
\begin{equation*}
\DCp(F \otimes G) \doteq \frac{1}{2} \sum_{\al, \beta} (-1)^{(1+\betrag{F}) \betrag{\phi^\al}} \int {\De}^{\al\beta}(x,y) \frac{\delta F}{\delta\phi^\al(x)} \otimes \frac{\delta G}{\delta\phi^\beta(y)} \mu(x) \mu(y),
\end{equation*}
with $\Delta^{\alpha \beta}$ the fundamental solution corresponding to $S_0''$.
The complex conjugation satisfies the relation $\overline{F\star G}=\overline{G}\star\overline{F}$, therefore we can use it to define an involution  $F^*(\ph)\doteq\overline{F(\ph)}$.
The resulting involutive topological algebra $\A_\reg(\Sigma) \doteq (\BV^\nm_\reg(\Sigma )[[\hbar]], \star, *)$ is the quantization of $(\BV^\nm_\reg(\Sigma ), \Pei{\cdot}{\cdot})$. 

In the next step we want to introduce the interaction. To this end we consider another product on $\A_\reg(\Sigma)$, namely the time-ordered product, defined as
\begin{equation*}
F \T G \doteq m \circ \exp(i\hbar \DDp)(F \otimes G),
\end{equation*}
with the functional differential operator
\begin{equation*}
\DDp(F \otimes G) \doteq \sum_{\al,\beta} (-1)^{(1+\betrag{F}) \betrag{\phi^\al}} \int {\De_D}^{\al, \beta}(x,y) \frac{\delta F}{\delta\phi^\al(x)} \otimes \frac{\delta G}{\delta\phi^\beta(y)} \mu(x) \mu(y),
\end{equation*}
where $\De_D=\frac{i}{2}(\De_A+\De_R)$ is called the Dirac propagator. Due to the support properties of the propagators, $\T$ coincides with $\star$ for functionals with time ordered supports. Moreover, it is equivalent in the sense of $\star$-products to the pointwise product of classical field theory by the so-called time ordering operator $\TT$, defined as
\[
\TT(F)\doteq e^{i\hbar\DD}(F),
\]
with 
\begin{equation*}
\DD(F) \doteq \sum_{\al,\beta} (-1)^{(1+\betrag{F}) \betrag{\phi^\al}} \int {\De_D}^{\al, \beta}(x,y) \frac{\delta^2 F}{\delta\phi^\al(x)\delta\phi^\beta(y)} \mu(x) \mu(y),
\end{equation*}
by the formula
\begin{equation*}
F \T G \doteq \TT(\TT^{-1} F \cdot \TT^{-1}G).
\end{equation*}
Here $\TT^{-1}$ is the inverse of $\TT$ in the sense of formal power series. Note that $\phi^\al = \Phi, C, \bar C, B$, so neither $\star$ nor $\T$ affect the antifields $\phi^\ddagger_\al$.
We denote by $S_0$ the antifield-independent part of the free gauge-fixed action \eqref{eq:L_0}. The remainder
\[
 S_\ia = S_0 - S_\ext = S_1 + \theta_0 + \theta_1
\]
is then the interacting part. For now we assume that it is an element of $\A_\reg(\Sigma)[[\lambda]]$. The formal S-matrix is defined as the time-ordered exponential
\begin{equation}\label{Smatrix}
\Scal(S_\ia) \doteq \expT{S_\ia} = \TT(e^{\TT^{-1} S_\ia}) = \sum_{k=0}^\infty \frac{1}{k!} \underbrace{S_\ia \T \dots \T S_\ia}_k.
\end{equation}
We can now define the relative S-matrix for $F\in\A_\reg(\Sigma)$, $S_\ia \in \A_\reg(\Sigma)[[\lambda]]$ by the formula of Bogoliubov,
\begin{equation*}
\Scal_{S_\ia}(F) \doteq \Scal(S_\ia)^{\star-1} \star \Scal(S_\ia+F).
\end{equation*}
Interacting quantum fields are generated by $\Scal_{i S_\ia / \hbar}(F)$ and can be written as
\begin{equation}\label{Rv}
 R_{S_\ia}(F) \doteq \frac{\ud}{\ud \eta}\Big|_{\eta=0}\Scal_{i S_\ia / \hbar}(\eta F) = \left( \expT{i S_\ia / \hbar} \right)^{\star-1} \star \left( \expT{i S_\ia / \hbar} \T F \right).
\end{equation}
This is to be understood in the sense of formal power series in $\hbar$, $\lambda$, and the antifields in $S_\ia$. Note that the terms with negative powers of $\hbar$ on the \rhs cancel.

The algebra $\A_\reg(\Sigma)$ is equipped with the time ordered Schouten bracket $\{.,.\}_{\sst{\TT}}$ defined as
\begin{equation*}
\{ F, G \}_{\sst{\TT}}=\mathcal{T}\{\mathcal{T}^{-1}F,\mathcal{T}^{-1}G\}.
\end{equation*}
More explicitly,
\begin{equation*}
\{F,G\}_{\sst{\TT}}=-\sum_\alpha\int\!\left(\!\frac{\delta F}{\delta\phi^\alpha(x)}\T\frac{\delta G}{\delta\phi_\alpha^\ddagger(x)}+(-1)^{|F|}\frac{\delta F}{\delta\phi_\alpha^\ddagger(x)}\T\frac{\delta G}{\delta\phi^\alpha(x)}\!\right) \mu(x),
\end{equation*}
where $\betrag{F} = \#\gh(F)$. The classical ideal generated by the free equations of motion is transformed into the image of the time-ordered Koszul operator
\begin{equation*}
\delTo = \{ \cdot, S_0\}_{\sst{\TT}}.
\end{equation*}

Now we want to characterize the quantum ideal of the equations of motion. Following \cite{RejznerFredenhagenQuantization} we define\footnote{Note that this is \textit{not} a Poisson bracket, as $\star$ is not graded commutative.}
\begin{equation*}
\{F,G\}_{\star}=-\sum_\alpha\int\!\left(\!\frac{\delta F}{\delta\phi^\alpha(x)}\star\frac{\delta G}{\delta\phi_\alpha^\ddagger(x)}+(-1)^{|F|}\frac{\delta F}{\delta\phi_\alpha^\ddagger(x)}\star\frac{\delta G}{\delta\phi^\alpha(x)}\!\right) \mu(x).
\end{equation*}
The quantum ideal of equations of motion is characterized as the image of $\{.,S_0\}_\star$. It is related to the classical one by
\begin{equation*}
i \hbar \Lap(F)=\{F,S_0\}_{\sst\TT}-\{F,S_0\}_\star,
\end{equation*}
where $\Lap$ is a functional differential operator defined on $\A_\reg(\Sigma)$ by
\[
\Lap F=\sum\limits_\alpha(-1)^{\betrag{\phi_\al} (1+\betrag{F})} \int \frac{\delta^2 F}{\delta\phi_\al^\ddagger(x)\delta\phi^\al(x)} \mu(x).
\]
The main difficulty in quantum BV formalism is that $\Lap$ in the above form is not well defined on local elements of the BV complex. However, it was shown in \cite{RejznerFredenhagenQuantization} that this problem can be resolved if we replace in all the definitions the non-renormalized time-ordered product $\T$ with the renormalized one $\TR$. We will come back to this discussion in Section \ref{renorm}.

Finally, we discuss the quantum master equation and the quantum BV operator. We define the \textit{quantum BV operator} $\hat{s}$ as the deformation of\footnote{Note that this is indeed a derivation, even though $\{ \cdot, F \}_\star$ is not a derivation for arbitrary $F$, \cf \cite{RejznerFredenhagenQuantization}.} $\{ \cdot, S_0 \}_\star$ under the action of $R_{S_\ia}$,
\begin{equation}\label{intertwining:s}
\hat{s} \doteq R_{S_\ia}^{-1} \circ \{ \cdot, S_0 \}_\star \circ R_{S_\ia}.
\end{equation}
It was proven in \cite{RejznerFredenhagenQuantization} that the 0-th cohomology of $\hat{s}$ characterizes the gauge invariant quantum observables and it is independent of the choice of gauge fixing fermion $\Psi$, as long as the quantum master equation (QME) 
\begin{equation}\label{suff:condition}
\{ \expT{i S_\ia / \hbar}, S_0 \}_\star = 0
\end{equation}
is fulfilled. This equation can be understood as the gauge invariance of the formal S-matrix. If the QME holds, then (\ref{intertwining:s}) can be rewritten as
\begin{equation}\label{QBV0}
\hat{s} F = \expT{-i S_\ia / \hbar} \T \left( \{ \expT{ i S_\ia / \hbar} \T F, S_0  \}_{\star}\right).
\end{equation}
To make contact with the standard approach \cite{HenneauxTeitelboim}, we note that using the properties of $\star$ and $\T$ one can rewrite  (\ref{suff:condition}) and (\ref{QBV0}) as
\begin{gather*}
\tfrac{1}{2}\{S_0 + S_\ia, S_0 + S_\ia \}_{\sst\TT}=i \hbar \Lap (S_0 + S_\ia), \\
\hat{s} F = \{ F, S_0 + S_\ia \}_{\sst\TT} - i \hbar \Lap F,
\end{gather*}
respectively.

\subsection{Extension to more singular elements}
\label{sec:Extension}
In the next step we want to extend the associative product $\star$ to more singular objects than the elements of $\BV^\nm_\reg(\Sigma)$. The right class of functionals turns out to be $\BV^\nm_{\mc}(M)$. On this space we first introduce the Wick ordering. This can be done in a purely algebraic way, which does not require the existence of a preferred (``vacuum'') state.
For this construction
one needs Hadamard two-point functions, i.e., a set of distributions $\omega^{\alpha \beta}$ on $\Sigma^2$ fulfilling
\begin{align*}
  \omega^{\alpha \beta}(x,y) - (-1)^{\betrag{\phi^\alpha} \betrag{\phi^\beta}} \omega^{\beta \alpha}(y,x)& = i \Pei{\phi^\alpha(x)}{\phi^\beta(y)}, \\
 O^\alpha_\beta \omega^{\beta \gamma} & = 0, \\
 \WF(\omega^{\alpha \beta}) & \subset C_+, \\
 \overline{\omega^{\alpha \beta}(x,y)} & = \omega^{\beta \alpha}(y,x).
\end{align*}
Here $O^\alpha_\beta \phi^\beta = 0$ are the equations of motion, and
\[
 C_+ = \{ (x_1, x_2; k_1, - k_2) \in T^* \Sigma^2 \setminus \{ 0 \} | (x_1; k_1) \sim (x_2; k_2), k_1 \in \bar V^+_{x_1} \},
\]
where $(x_1; k_1) \sim (x_2; k_2)$ if there is a lightlike geodesic from $x_1$ to $x_2$ to which $k_1$ and $k_2$ are coparallel. In the context of gauge theories, one also has to require that
\begin{equation}
\label{eq:gaugeInvariance}
 {\gamma_0}^\alpha_\gamma \omega^{\gamma \beta} + (-1)^\betrag{\phi^\alpha} {\gamma_0}^\beta_\gamma \omega^{\alpha \gamma} = 0,
\end{equation}
where ${\gamma_0}^\alpha_\gamma$ are the coefficient of the free gauge transformation in the basis $\{ \phi^\alpha \}$. This ensures that $\gamma_0$ still acts as a derivation on the deformed algebra. In the next subsection we will prove that Hadamard two-point functions exist on generic on-shell backgrounds. 

Let us pick an arbitrary two-point function $\omega$. 
Defining the Wick polynomials means basically giving a precise mathematical sense to expressions of the form $\WDp{\ph^2(x)}_{\omega} \doteq \lim(\ph(x)\ph(y)-\omega(x,y)\1)$, $x\rightarrow y$. On arbitrary backgrounds one encounters certain difficulty, since the choice of $\omega$ is not unique and cannot be done in a covariant way \cite{HollandsWaldWick}. Nevertheless there is a way out if we make use of the fact that different choices of $\omega$ differ only by a smooth function. Let us write $\omega$ in the form $\omega=\frac{i}{2}\Delta+H$. Following  \cite{BDF09} (see also \cite{Rej11}) we define a transformation
 $\al_{H}: \A_\reg(\Sigma) \rightarrow \A_\reg(\Sigma)$ by
 \[
 \al_{H}\doteq\exp({\hbar\Ga_{H}}),
 \]
where
\[
\Ga_{H}(F) \doteq \sum_{\al,\beta}(-1)^{(1+\betrag{F}) \betrag{\phi^\al}} \int {H}^{\al \beta}(x,y) \frac{\delta^2 F}{\delta\phi^\al(x)\delta\phi^\beta(y)} \mu(x) \mu(y).
\]
This allows us to define another product on $\A_\reg(\Sigma)$, equivalent to $\star$, via
\[
F\star_{H} G\doteq\al_{H}(\al_{H}^{-1}(F)\star \al_{H}^{-1}(G)).
\]
The crucial point is that this product is also well defined on $\BV^\nm_\mc(\Sigma)[[\hbar]]$.

As there is no preferred two-point function, and hence no preferred $H$, we have to consider all of them simultaneously. We denote by $\Had(\Sigma)$ the set of admissible $H$, and define $\A(\Sigma)$ to be the space\footnote{With addition given by $(F+G)_H = F_H + G_H$.} of families $F = \{ F_H \}_{H \in \Had(\Sigma)}$, $F_H \in \BV^\nm_\mc(\Sigma)[[\hbar]]$ fulfilling the relation
\[
 F_{H'} = \exp(\hbar \Gamma_{H'-H}) F_H.
\]
We equip $\A(\Sigma)$ with the product
\[
 (F \star G)_H = F_H \star_H G_H.
\]
The topology is the one inherited from $\BV^\nm_\mc(\Sigma)$ for an arbitrary $H$ (here it is important that $H-H'$ is smooth, so that the topologies are equivalent). The support of $F \in \A(\Sigma)$ is defined as $\supp(F) = \supp(F_H)$. Again, this is independent of $H$. Functional derivatives are defined by
\[
 \skal{\frac{\delta F}{\delta \phi}}{\psi}_H = \skal{\frac{\delta F_H}{\delta \phi}}{\psi},
\]
which is well defined as $\Gamma_{H'-H}$ commutes with functional derivatives.

The assignment $\Sigma \to (\A(\Sigma), \star)$ is a covariant functor between $\CatSub$ and $\CatAlg^*$ which maps a morphism $\chi: \Sigma \to \Sigma'$ to the morphism $\chi_*: \A(\Sigma) \to \A(\Sigma')$ defined by
\[
 (\chi_* F)_H \doteq \chi_*(F_{H|_{\Sigma \times \Sigma}}),
\]
where $\chi_*$ on the \rhs is the morphism of $\BV^\nm_\mc$. One defines $\A_\loc(\Sigma)$ in the analogous way by restricting to $F_H \in \BV^\nm_\loc(\Sigma)[[\hbar]]$. This is a covariant functor between $\CatSub$ and $\CatVec_i$.

One can think of the elements of $\A(\Sigma)$ as Wick powers. To see this, note that, in the case of a scalar field,
\[
 \phi^2(x) \star_H \phi^2(y) = \phi^2(x) \phi^2(y) - 2 \omega(x,y) \phi(x) \phi(y) + 2 \omega(x,y)^2,
\]
which corresponds to Wick's theorem for a two-point function $\omega$.

\subsection{Two-point functions and states}
\label{sec:2pt}

The explicit construction of two-point functions on flat (or very symmetric) backgrounds is rather straightforward. For the proof of the existence of two-point functions on generic backgrounds, one usually proceeds via the deformation argument given in \cite{FNW81}, i.e., one deforms the background in the past of a neighborhood of some Cauchy surface $\Cauchy$ to a flat (or highly symmetric) background, where one can construct two-point functions. These can then be transported, using the equations of motion, to the neighborhood of $\Cauchy$, and from there to the whole original background. In the present setting, we have to be a bit more careful, as after the deformation, we no longer have an on-shell background. Hence, the free BRST operator is not a symmetry on the deformed background, and we have no guarantee that \eqref{eq:gaugeInvariance} is preserved under transport via the equation of motion.

However, we may proceed as follows: We deform the background as indicated above. On this background, we postulate
the Lagrangean \eqref{eq:L_0_B}. 
It coincides with the original Lagrangean where the background is on-shell (note that in the derivation of \eqref{eq:L_0} we used that the background is on-shell). In the region where the deformed background is on-shell and flat (in the sense that $\nabla_\mu \ud X^a_\nu = 0$), one constructs a two-point function corresponding to this action. As we explicitly show below, it can be chosen to be of the form
\begin{equation}
\label{eq:omega}
 \omega = \begin{pmatrix} 0 & \omega_{\mu b} & 0 \\ \omega^{a \nu} & 0 & 0 \\ 0 & 0 & \omega^a_b \end{pmatrix}
\end{equation}
in the case of an open string. Using the equations of motion corresponding to the Lagrangean \eqref{eq:L_0_B}, we can transport it to the neighborhood of $\Cauchy$, where it is then of the form \eqref{eq:Delta}. In order to comply to \eqref{eq:gaugeInvariance} we define the two-point function
\[
 \omega(C^\mu(x) \bar C^\nu(y) ) = - i g^{\mu \lambda} \ud X^a_\lambda(x) h_{ab} \omega(\Phi^b(x) B^\nu(y))
\]
in this neighborhood of $\Cauchy$, and set all the remaining ones to zero. Using the equation of motion, the two-point function can be transported to the whole original background $\Sigma$.

For a generic open string background with boundary conditions as discussed in Remark~\ref{rem:bc}, we note that also the boundary conditions can be deformed (possibly via a linear combination of Dirichlet and Neumann boundary conditions) to the Dirichlet case. Below, we explicitly construct a two-point function for the flat Dirichlet string, whose $(B, \Phi)$ component has the form \eqref{eq:omega}. It follows that by the procedure indicated above, one can construct two-point functions for any on-shell background for the open string. 

We remark that we did not require any positivity, as this was not relevant for the definition of Wick powers in the previous section. However, we will show below that on the flat background one can construct a two-point function such that the physical states (corresponding to $Q\Phi$) are positive definite, whereas those in the image of $\gamma_0$, i.e., those corresponding to $B$ and $C$, are null. By inspection of the form of the fundamental solution \eqref{eq:Delta} and the form of the two-point function \eqref{eq:omega}, we see that this is still the case for the two-point function constructed as above. Hence, the corresponding GNS representation will be positive definite\footnote{\label{ft:posDef}In the sense that $\braket{\Psi}{\Psi} \geq 0$ for $\Psi \in \ker Q_0$ and $\braket{\Psi}{\Psi} = 0$ for $\Psi \in \ker Q_0$ iff $\Psi \in \ran Q_0$.}.

\subsubsection{The Dirichlet string}
We consider the open string with Dirichlet boundary conditions. As a background, we choose the hypersurface given by the coordinates $\tau, \sigma$ as
\[
 \R \times [0,\pi] \ni (\tau, \sigma) \mapsto (\tau, \sigma, 0, \dots, 0) \in M.
\]
The induced metric in these coordinates is $g_{\mu \nu} = \eta_{\mu \nu} = \diag(-1,1)$ and we have $\ud X^a_\mu = \1^a_\mu$.
We consider Dirichlet boundary conditions
\begin{equation*}
 \varphi^a(0) = 0 = \varphi^a(\pi),
\end{equation*}
and analogously for the auxiliary fields.
The linearized equations of motion are
\begin{align*}
 \Box C^\mu & = 0, &
 \Box \bar C_\mu & = 0, \\
 \Box B_\mu & = 0, &
 \Box \Phi^a & = 0.
\end{align*}
According to our discussion in Section~\ref{sec:Peierls}, the non-vanishing free Peierls brackets are given by
\begin{align*}
 \Pei{C^\mu(x)}{\bar C_\nu(y)} & = - i \delta^\mu_\nu \Delta(x,y), \\
 \Pei{B_\mu(x)}{\Phi^a(y)} & = \Delta(x,y) \ud X^a_\mu, \\
 \Pei{\Phi^a(x)}{\Phi^b(y)} & = \Delta(x,y) k^{ab}.
\end{align*}
Here $\Delta$ is the scalar causal propagator corresponding to Dirichlet boundary conditions and $k^{ab} = \diag(0,0,1, \dots, 1)$.

In order to construct corresponding states, we proceed as in \cite{HollandsYM}. For $n \in \N$ we consider the orthonormal positive frequency solutions
\[
 u_n(\tau, \sigma) = \sqrt{\tfrac{2}{\pi}} \sin(n \sigma) e^{i n \tau}
\]
and set
\[
 \omega(x, y) = \sum_{n \in \N} d_n u_n(x)^* u_n(y),  
\]
with $d_n = \frac{1}{2n}$.
We may now set
\begin{align*}
 \omega_2(C^\mu(x) \bar C_\nu(y)) & = - i \delta^\mu_\nu \omega(x,y), \\
 \omega_2(\Phi^a(x) \Phi^b(y)) & = k^{ab} \omega(x,y), \\
 \omega_2(B_\mu(x) \Phi^a(y)) & = \ud X^a_\mu \omega(x,y),
\end{align*}
with all other combinations (also involving antifields) vanishing.
A Hilbert space representation compatible with the above two-point function may be constructed as follows: Consider the symmetric Fock space $\HS_\varphi$ corresponding to the one-particle space generated by elements $e^a_{m}$, $f_{\mu m}$, $m \in \N$, $a \in \{0, \dots, n-1 \}$ with indefinite inner product
\begin{align*}
 (e^a_{n}, e^b_{m}) & = k^{ab} \delta_{mn}, & (f_{\mu m}, f_{\nu n}) & = 0, & (e^a_{m}, f_{\nu n}) & = \delta_{mn} \ud X^a_\nu.
\end{align*}
Also consider the antisymmetric Fock space $\HS_c$ corresponding to the one-particle space generated by elements $g_{\mu s n}$, with $s \in \{1, 2\}$ and inner product $(g_{\mu s n}, g_{\nu t m}) = i g_{\mu \nu} \eps_{st} \delta_{m n}$, where $\eps$ is the standard symplectic $2 \times 2$ matrix. Also consider the standard creation operators ${a^a_m}^+, b^+_{\mu m}, c^+_{\mu s m}$ that create $e^a_{m}$, $f_{\mu m}$ and $g_{\mu s m}$, and corresponding annihilation operators $a^a_{m}, b_{\mu m}, c_{\mu s m}$ that are normalized as
\begin{align*}
 [a^a_m, {a^b_n}^+] & = h^{ab} \delta_{mn}, & [b_{\mu m}, b_{\nu n}^+] & = \eta_{\mu \nu} \delta_{mn}, & \{ c_{\mu s m}, c^+_{\nu t n} \} & = \eta_{\mu \nu} \delta_{m n} \delta_{st}.
\end{align*}
On $\HS_\varphi \otimes \HS_c$  consider the representation
\begin{align*}
 \pi(\Phi^a(x)) & =
\begin{cases}
\sum_{n \in \N} \sqrt{d_n} \left( u_n(x) {a^a_n}^+ + u_n(x)^* a^a_{n} \right) & a \geq 2, \\
\sum_{n \in \N} \sqrt{d_n} \left( u_n(x) {a^a_n}^+ + u_n(x)^* \ud X^a_\mu \eta^{\mu \nu} b_{\nu n} \right) & a \in \{ 0, 1 \}, 
\end{cases} \\
 \pi(C_\mu(x)) & = \sum_{n \in \N} \sqrt{d_n}  \left( u_n(x) c^+_{\mu 2 n} - i u_n(x)^* c_{\mu 1 n} \right), \\
 \pi(\bar C_\mu(x)) & = \sum_{n \in \N} \sqrt{d_n}  \left( u_n(x) c^+_{\mu 1 n} + i u_n(x)^* c_{\mu 2 n} \right), \\
 \pi(B_\mu(x)) & = \sum_{n \in \N} \sqrt{d_n} \left( u_n(x) b^+_{\mu n} + u_n(x)^* \ud X^a_\mu a^b_{n} h_{ab} \right).
\end{align*}
Using \eqref{eq:current} and
\[
 \int_0^\pi u_n(0,\sigma)^{(*)} u_m(0,\sigma) \ud \sigma = \delta_{m n},
\]
we compute the free BRST charge
\[
  \pi(Q_0) = - \eta^{\mu \nu} \sum_{n \in \N} \left( b^+_{\nu n} c_{\mu 1 n} - i \ud X^a_\nu h_{ab} a^b_{n} c^+_{\mu 2 n} \right),
\]
which is indeed nilpotent. We see that the transversal fluctuations, i.e., those corresponding to $\Phi^a$ for $a \geq 2$ are in the kernel of $\pi(Q_0)$, whereas those corresponding to $C^\mu$ and $B_\nu$ are even in the image of $\pi(Q_0)$. Hence, we have a positive definite representation, in the sense defined in footnote~\ref{ft:posDef}.

\subsection{Renormalization}
\label{renorm}
We already extended the operator product $\star$ to more singular objects, i.e., elements of $\A(\Sigma)$. We now want to do the same also for the time ordered product $\T$. Here the situation is more complicated. In fact, in order to control the ambiguities present in this extension, it is advantageous to proceed differently than in Section~\ref{sec:Extension}.

We recall that multilocal functionals were defined as the space of finite products of local ones, and microcausal ones as a certain completion thereof. The first goal will thus be to define time-ordered products as linear maps
\[
 \TTR^{\,k}: \BV^\nm_\loc(\Sigma)^{\otimes k} \to \A(\Sigma).
\]
They should be natural transformations between the functors ${\BV^\nm_\loc}^{\otimes k}$ and $\A$. One requires the following properties:
\begin{description}
\item[Starting element.] For the lowest order time-ordered products we require $\TTR^{0}=0$, $(\TTR^{1} F)_H = \exp(\hbar \Gamma_{H-h}) F$. Here $h$ is the symmetric part of the Hadamard parametrix.\footnote{The usage of $h$ ensures the covariance of the construction. This corresponds to the locally covariant Wick powers introduced in \cite{HollandsWaldWick}. Note that $h$ is well-defined only locally, but the coinciding point limits of all its derivatives are unique, so that the expression is well-defined on local functionals.}
\item[Supports.]  $\supp\TTR^{k}(F_1,\dots,F_k)\subset\bigcup\limits_{i=1}^k\supp F_i$.
\item[Symmetry.] The time ordered products are graded-symmetric under a permutation of factors.
\item[Unitarity.]
Let $\overline{\TTR}^{k}(\otimes_i F_i) =
[\TTR^{k}(\otimes_i F_i^*)]^*$ be
the antitime-ordered product. Then we require
\[
\overline{\TTR}^{k} \bigg( \bigotimes_{i=1}^k F_i \bigg) =
\sum_{I_1 \sqcup \dots \sqcup I_j = \underline{k}}
(-1)^{k + j} \TTR^{\betrag{I_1}}\bigg(
\bigotimes_{i \in I_1} F_i \bigg) \star \dots \star
\TTR^{\betrag{I_j}}\bigg(\bigotimes_{j \in I_j} F_j \bigg),
\]
where the sum runs over all partitions of the set $ \underline{k}\doteq\{1, \dots, k\}$ into
pairwise disjoint subsets $I_1, \dots, I_j$.
\item[Causal Factorization.]
If $J^+(\supp F_l) \cap \supp F_m = \emptyset$ for all $1 \leq l \leq i$, $i+1 \leq m \leq k$, then
\begin{equation*}
\TTR^{k}(F_1\otimes \dots \otimes F_k)=
\TTR^{i}(F_1\otimes \dots \otimes F_i) \star
\TTR^{k-i}(F_{i+1} \otimes \dots \otimes F_k).
\end{equation*}
\item[Field equation.]
The free field equation is implemented in a Schwinger--Dyson type equation:
\begin{multline}\label{fieldeq}
\TTR^{\,k+1}\bigg(
\frac{\delta S_0}{\delta \phi^\al(x)} \otimes \bigotimes_{i=1}^k F_i
\bigg)
= i \hbar \sum_i
\TTR^{\,k}
\bigg(
F_1 \otimes \cdots \frac{\delta F_i}{\delta
    \phi^\al(x)} \otimes \cdots F_k)
\bigg)+\\+\frac{\delta S_0}{\delta \phi^\al(x)} \star \TTR^{\,k}
\bigg(
F_1 \otimes \cdots  F_k
\bigg)
\end{multline}
\end{description}
Further requirements are field independence and $\ph$-locality, \cf \cite{BDF09}, and the microlocal spectrum condition, scaling\footnote{In the present context, the scaling condition is a condition on the relation between time-ordered products on $\Sigma$ and $\eta \Sigma$, where $\eta$ is a positive scale parameter.}, and the smooth (analytic) dependence on the background fields ($\ud X$ and $g$), \cf \cite{HollandsWaldTO}. Renormalized time-ordered products fulfilling these properties were constructed in \cite{HollandsWaldTO} for the scalar field and in \cite{HollandsYM} for Yang--Mills fields. The techniques are straightforwardly generalizable to the present case. This is sketched in Section~\ref{sec:Poincare}, where we show that also covariance under global Poincar\'e transformations can be kept.

With the above assumptions, $\TTR^k$ is uniquely fixed by the lower order maps $\TTR^m$, $m<k$, up to the addition of a $k$-linear map
\[
Z_k: \BV^\nm_\loc(\Sigma)^{\otimes k} \to \A_\loc(\Sigma),
\]
which describes the possible finite renormalizations (and is again a natural transformation). This renormalization freedom is characterized by the renormalization group in the sense of St\"uckelberg--Petermann \cite{DuetschFredenhagenAWI}. Its relation to different notions of the renormalization group such as the Gell-Mann--Low or the Wilson renormalization group is discussed in \cite{BDF09}.

Having defined the renormalized $k$-fold time ordered products $\TTR^k$, one can proceed to define the renormalized time ordered product $\TR$ as a binary operation on an appropriate domain in $\A(\Sigma)$. Let $\dot \BV^{\nm}_\loc(\Sigma)$ be the space of local functionals which vanish at $\phi^\al=(0,0,0,0)$, and let $S \dot \BV^\nm_\loc(\Sigma)$ denote the corresponding space of graded symmetric tensor powers. It was shown in \cite{RejznerFredenhagenQuantization} that the pointwise multiplication $m: S \dot \BV^\nm_\loc(\Sigma) \rightarrow \BV^\nm(\Sigma)$ is bijective. Therefore we can use its inverse $m^{-1}$ to map multilocal functionals to $S \dot \BV^\nm_\loc(\Sigma)$, where the maps $\TTR^k$ have their domains.
The renormalized time ordering operator $\TTR$ can then be defined as a natural transformation between $\BV^\nm$ and $\A$ given by
\begin{equation*}
\TTR\doteq(\bigoplus_k  \TTR^k)\circ m^{-1}.
\end{equation*}
This operator is a formal power series in $\hbar$ starting with the identity, hence it is invertible on its image, and the renormalized time ordered product is now defined on the image $\TTR(\BV^\nm(\Sigma)[[\hbar]])$ of $\TTR$ by 
\begin{equation*}
F\TRH G\doteq\TTR(\TTR^{-1}F\cdot\TTR^{-1}G).
\end{equation*}
The renormalized formal S-matrix is defined as in (\ref{Smatrix}), but with $\T$ replaced by $\TR$.

An important result in causal perturbation theory is the main theorem of renormalization \cite{BDF09}. It states that
given two renormalized S-matrices $\Scal$ and $\widehat{\Scal}$ 
satisfying the conditions Causality, Starting Element, $\varphi$-locality, and 
Field Independence (see  \cite{BDF09} for details), there exists a unique element of the St\"uckelberg--Petermann renormalization group $Z\in\mathcal{R}$ such that 
\begin{equation}
\widehat{\Scal}=\Scal\circ Z.\label{mainthm}
\end{equation}
Conversely, given an S-matrix $\Scal$ satisfying the 
mentioned conditions and a $Z\in\mathcal{R}$, equation (\ref{mainthm})
defines a new S-matrix $\widehat{\Scal}$ also satisfying these conditions. 

An important consequence of the fact that $\TR$ can be defined as a binary associative operation is the possibility to obtain the renormalized QME and the renormalized BV operator in a natural way. This was done in \cite{RejznerFredenhagenQuantization}. We review here shorty the results obtained there. The renormalized QME and the renormalized BV operator can be algebraically written as
\begin{align*}
0 & = \{ \expTR{i S_\ia^\tau / \hbar}, S^\tau_0 \}_{\star},\\
\hat{s}(F) & = \expTR{- i S_\ia^\tau / \hbar} \TR \{ \expTR{i S_\ia^\tau / \hbar} \TR F, S^\tau_0 \}_{\star},
\end{align*}
where $F \in \TTR(\BV^\nm(\Sigma)[[\hbar]])$ and $S^\tau_0 = \TTR^1 S_0$, $S_\ia^\tau = \TTR^1 S_\ia$. This has to be understood in the sense of formal power series in $\hbar$, $\lambda$, and the antifields.
These formulas can be simplified using the Master Ward Identity \cite{BreDue,HollandsYM}. Then they take the form
\begin{align}
0 & = \tfrac{1}{2} \{ S^\tau_0 + S_\ia^\tau, S^\tau_0 + S_\ia^\tau \}_{\TTR} - \Lap_{S_\ia^\tau}(S_\ia^\tau), \nonumber \\
\label{eq:hat_s}
\hat{s} F & = \{F, S^\tau_0 + S_\ia^\tau \}_{\TTR} - \Lap_{S_\ia^\tau}(F),
\end{align}
where $\Lap_{S_\ia^\tau}$ a linear map
$\Lap_{S_\ia^\tau}:\TTR(\BV^\nm_\loc(\Sigma)[[\hbar]]) \to \A_\loc(\Sigma)[[\lambda]]$. We can think of it as the renormalized version of the graded Laplacian $\Lap$.

\subsection{Poincar\'e covariance}
\label{sec:Poincare}
For Minkowski space as target space, the action has another, global, symmetry, namely the Poincar\'e transformations of the target space. We want to ensure that our model respects this symmetry.
However, the choice of a background $\Sigma$ breaks this symmetry. Hence, we can not expect to be able to represent the Poincar\'e group on $\BV^\nm(\Sigma)$ for some fixed $\Sigma$.\footnote{For Dirichlet boundary conditions, Lorentz transformations are not even a symmetry on a fixed background. Any implementation of Lorentz transformations on a fixed background has the unpleasant feature of spoiling the perturbative expansion, as $\delta_\Lambda$ is of $\order(\lambda^{-1})$ is general.}
But we can ensure covariance if we also transform the background $\Sigma$. This not only guarantees the existence of a representation of the stabilizer group $\Poin_\Sigma$ of $\Sigma$ on $\BV^\nm(\Sigma)$, \cf the end of this subsection, but is also crucial in order to retain a Lorentz invariant Lagrangean in the renormalized setting, \cf the discussion of possible counterterms in Section~\ref{sec:Cohomology}.

The action $\alpha$ of the proper orthochronous Poincar\'e group $\Poin$ on $M$ transforms the background $\Sigma$, i.e., we have an action of $\Poin$ on $\CatSub$. For an element $g \in \Poin$, we can define a functor $\E_g$ by $\E_g(\Sigma) = \E(\alpha_g \Sigma)$. Each $g \in \Poin$ thus induces a natural transformation $\beta_g: \E \to \E_g$. Concretely, we have $(\beta_g \varphi)^a(x) = \Lambda^a_b \varphi^b(\alpha_g^{-1} x)$, where $\Lambda$ is the Lorentz transformation part of $g$, and $\alpha_g$ is now the map $\Sigma \to \alpha_g \Sigma$. It follows that indeed $\alpha_g ( \tilde X[\varphi](x)) = \tilde X[\beta_g \varphi](\alpha_g x)$, i.e., the image of the Poincar\'e transformed $\varphi$ coincides with the Poincar\'e transformed image of $\varphi$. On the functor $\F$, one acts via pullback, i.e., for $F \in \F(\Sigma)$ define $(\beta_g F)(\varphi) = F(\beta^{-1}_g \varphi)$ for $\varphi \in \E(\alpha_g \Sigma)$. Obviously, this action preserves the algebra structure of $\F$, i.e., for each $g \in \Poin$ we have a natural transformation $\beta_g: \F \to \F_g$, even when $\F$ is considered as a functor from $\CatSub$ to $\CatAlg$. All this can be straightforwardly extended to the $\BV^\nm$ functor. As $\alpha$ is a global symmetry of the action, it follows that the retarded and advanced propagator transform covariantly under $\alpha$. Hence, the same it true for the fundamental solution and thus for the canonical structure. 

It remains to ensure covariance at the quantum level. For this, it is advantageous to take another point of view on the model. For fixed $\Sigma$, we can view it as a quantum field theory on $\Sigma$, where the fields are sections of $T\Sigma$ and $TM$, the latter being some trivial vector bundle. Furthermore, there is a background field $\ud X$, which is a section of $T^*\Sigma \otimes TM$, and a nondegenerate bilinear form $h$ on $TM$. The metric $g$ happens to be given by $g = h(\ud X, \ud X)$. There is a global action $\alpha$ of the proper orthochronous Lorentz group $SO_0(1,n-1)$ on $TM$ that leaves $h$ invariant.\footnote{In this picture, where the target space is no longer present, Poincar\'e transformations act trivially on $\Sigma$ and $T \Sigma$. In the original picture, this corresponds to identifying the image of $\Sigma$ under Poincar\'e transformation with $\Sigma$ itself, by an isometry.} As this is also an invariance of the action, the Feynman parametrix respects this symmetry. Hence the only place where a noncovariant behavior can occur is renormalization, i.e., the extension of ill-defined products of distributions (time-ordered products) to the diagonal $\Delta_k \in \Sigma^k$. As shown in \cite{HollandsWaldTO}, this extension can be done in a local and covariant way, i.e., such that it only depends on local data and is independent of the choice of a coordinate system on $\Sigma$. A crucial point here is that one can ensure local Lorentz invariance, i.e., the Lorentz invariance of the extension to $\R^{d(k-1)}$ of a Lorentz invariant tensor-valued distribution on $\R^{d(k-1)} \setminus {0}$, identified with the Riemannian normal coordinates in $\{x\} \times \Sigma^{k-1}$ around $(x, \dots, x)$. The argument given in \cite{HollandsWaldTO} relies on the triviality of the first cohomology class of $SO_0(1, d-1)$ for finite-dimensional representations, which holds in the case $d >2$.\footnote{For the Lie algebra cohomology, this follows from Whitehead's lemma. To pass to the continuous group cohomology, one may use the van Est theorem \cite[Cor.~III.7.2]{Guichardet}.}
For $d=2$, this triviality does not hold, but we have $SO_0(1,1) \simeq \R$, so there is a single generator of the Lie algebra. Its kernel in a given representation characterizes the invariant vectors. Hence, one may use the procedure presented in \cite[App.~D]{DuetschFredenhagenAWI} to obtain a Lorentz invariant extension (one simply replaces the Casimir operator $C_0$ used there by the generator). Hence, we can ensure local Lorentz covariance, and analogously, covariance \wrt target space Lorentz transformations. In this way we have preserved the functoriality (by locality and covariance), translation covariance (as the construction was independent of the actual position of $\Sigma$ in $M$), and global Lorentz covariance.

Let us now consider the case where a subgroup $\Poin_\Sigma$ of the Poincar\'e group leaves $\Sigma$ invariant. We then have a continuous action of $\Poin_\Sigma$ on $\BV^\nm(\Sigma)$. Considering a state $\omega$ that is invariant under this action, we obtain a continuous representation of $\Poin_\Sigma$ on the GNS Hilbert space $\HS_\omega$ such that $\pi(g) \Omega_\omega = 0$ for all $g \in \Poin_\Sigma$.

\subsection{QME in the algebraic adiabatic limit}
\label{sec:QME}

In causal perturbation theory one has to work with interactions that are localized, but in physical situations there is usually no natural cutoff.
This problem can be solved, on the algebraic level, by the construction of the so-called algebraic adiabatic limit \cite{BrunettiFredenhagenScalingDegree}: In order to define the interacting observables localized in a region $\mathcal{O}$, one cuts off the interaction with a test function that is identically 1 on a neighborhood of $\mathcal{O}$. One can show that the algebras obtained by different choices of the cut-off function are unitarily equivalent. This construction naturally fits into 
the formulation of the classical and quantum theory, where the Lagrangeans are treated as natural transformations \cite{BDF09}. This point of view was also taken in \cite{RejznerFredenhagen} to formulate the classical master equation and in \cite{RejznerFredenhagenQuantization} for the quantum master equation. Here we give only a short review of this formulation.

Let $L_0$ be the free generalized Lagrangean at vanishing antifield number, and $L_\ia$ the interaction term. Both are now to be understood as natural transformations between $\D$ and $\BV^\nm$. The classical master equation (CME) is formulated as the condition that
\begin{equation*}
\{ L_0 + L_\ia, L_0 + L_\ia \} \sim 0,
\end{equation*}
with the equivalence relation defined in (\ref{equivalence}). Given $L_0$ and $L_\ia$ fulfilling this condition we construct natural transformations $\TTR^1({L_0})$ and   $\TTR^1({L_\ia})$  from $\D$ to $\A$. We denote the corresponding equivalence classes by $S_0$ and $S_\ia$ and it holds $\{S_0 + S_\ia, S_0 + S_\ia \}_{\TTR} \sim 0$. The quantum master equation is a statement that the S-matrix in the algebraic adiabatic limit is invariant under the quantum BV operator, i.e.,
\[
\supp \left( \expTR{-i{L_\ia^\tau}_\Sigma(f_1) / \hbar} \TR \left( \{ \expTR{i{L_\ia^\tau}_\Sigma(f_1) / \hbar},{L_0^\tau}_\Sigma(f)\}_{\star} \right) \right) \subset \supp \ud f \cup \supp \ud f_1.
\]
Using  the  Master Ward Identity, one finds that this expression is again an element of $\A_\loc(\Sigma)$, so the condition above can be also formulated on the level of natural transformations,
\begin{equation*}
 \expTR{-i S_\ia / \hbar} \TR \left( \{ \expTR{i S_\ia / \hbar}, S_0 \}_{\star} \right) \sim 0,
\end{equation*}
which can be written more explicitly as
\begin{equation}\label{eQMEr1}
 \tfrac{1}{2} \{ S_0 + S_\ia, S_0 + S_\ia \}_{\TTR} - \Lap_{S_\ia}(S_\ia) \sim 0.
\end{equation}
The central question in the BV quantization is the possibility to fulfill the QME in the above form. If $L_0$, $L_\ia$ satisfy the CME but $\Lap_{S_\ia}(S_\ia) \neq 0$ we can use the standard methods of homological perturbation theory and postulate a solution of the QME of the form $W = \sum_n \hbar^n W_n$, $W_0=S_\ia$. Following \cite{HenneauxTeitelboim}, we want to construct $W_n$ inductively in each order. From the nilpotency of $\hat s$ it follows that we obtain some further consistency conditions on $\Lap_{S_\ia}$, in particular $\{\Lap^1_{S_\ia}(S_\ia), S_0 + S_\ia \}_{\TTR} \sim 0$. Using these additional constraints one can show that the QME has a solution if the cohomology of $s$ on the space of actions vanishes in ghost number $\#\gh=1$. This effectively amounts to calculate the cohomology of $s$ modulo $\ud$ on the space of local forms of ghost number $\#\gh=1$. In Section~\ref{sec:Cohomology}, we
prove that for the Nambu--Goto action this cohomology does not contain any nontrivial covariant elements. This shows that the QME can be fulfilled and the string, and its higher-dimensional generalizations, can be perturbatively quantized.

\subsection{Further quantization conditions}
Apart from the QME, there are two further quantization conditions one should impose. In order to ensure that an observable $F$, i.e., an $s$ invariant functional of ghost number 0, is still an observable after quantization, one needs
\begin{equation}
\label{eq:12c}
 \hat s F = 0. 
\end{equation}
Here $\hat s$ is defined by \eqref{eq:hat_s} with the cutoff function in $S_\ia$ chosen to be $1$ on $\supp F$. A detailed discussion of this condition in the context of Yang--Mills theories can be found in \cite{HollandsYM}. The obstructions to this equation are governed by the cohomology $H(s)$ at ghost number 1.

In order to achieve the the nilpotency of the interacting BRST charge, one has to ensure that
\begin{equation}
\label{eq:12b}
 \tilde s \int t_\mu J^\mu \mu = 0,
\end{equation}
where $J^\mu$ is the BRST current and $t_\mu$ is closed, i.e., $\nabla_\mu t_\nu - \nabla_\nu t_\mu = 0$. Again, we refer to \cite{HollandsYM} for details. The obstructions are governed by the cohomology $H(s|\ud)$ at ghost number 2 and form degree $d-1$.

\section{The cohomological structure}
\label{sec:Cohomology}

It is well-known that there is a bijection between the cohomology of $s$ and a subspace of the cohomology of the free part $s_0$ (the so-called Abelian cohomology). This follows from the following theorem, \cf \cite[Prop.~5.6]{PiguetSorella}, for example. For the convenience of the reader, we included a proof in the appendix.

\begin{theorem}
\label{thm:filtration}
Let $N$ be an operator on the space $\Omega$ of local forms with eigenvalues in the nonnegative integers, which commutes with $\ud$. Assume that a corresponding filtration of $\Omega$ and $s$ exists, i.e.,
\begin{align*}
 \Omega & = \bigoplus_{i \in \N_0} \Omega_i, & N \Omega_i & = i \Omega_i, & s_i \Omega_j & \subset \Omega_{i+j}.
\end{align*}
Also assume that $s_0$ is the lowest nontrivial component in the filtrations of $s$.
Then $s_0$ is nilpotent, and there are linear injective maps $\pi: H(s) \to H(s_0)$ and $\pi': H(s | \ud) \to H(s_0 | \ud)$.
\end{theorem}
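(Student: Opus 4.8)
The plan is to exploit the $N$-filtration directly, in the spirit of the spectral sequence of a filtered complex, reading off $H(s)$ from the leading ($s_0$) part of $s$. Write $s = \sum_{i \geq 0} s_i$ with $s_i \Omega_j \subset \Omega_{i+j}$. Nilpotency of $s_0$ is the easy part: decomposing $s^2 = 0$ according to the $N$-eigenvalue gives $\sum_{i+j=n} s_i s_j = 0$ for every $n$, and the component at $n=0$ reads $s_0^2 = 0$. Note also that $s_0$ preserves the $N$-degree, so $H(s_0) = \bigoplus_p H_p(s_0)$ is itself graded by the filtration.

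Next I would build $\pi$. For an $s$-cocycle $a = \sum_i a_i$ let $p$ be the lowest index with $a_p \neq 0$; expanding $s a = 0$, its degree-$p$ component is $s_0 a_p = 0$, so the leading term $a_p$ is $s_0$-closed. The engine is a reduction step: if $a_p = s_0 c$ is $s_0$-exact with $c \in \Omega_p$, then $a - sc$ is cohomologous to $a$, and since $sc = s_0 c + s_1 c + \dots$ has leading term $s_0 c = a_p$ in degree $p$, the element $a - sc$ has its lowest nonvanishing component in degree strictly above $p$. Hence a representative can be traded for a cohomologous one of strictly higher leading degree \emph{exactly when} its leading term is $s_0$-exact. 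I then define $\pi([a])$ to be the $s_0$-class of the leading term of a \emph{fully reduced} representative, i.e.\ one whose leading term is not $s_0$-exact (or which is $0$). Granting that this is well posed, injectivity is immediate: $\pi([a]) = 0$ means the leading term of a fully reduced representative is $s_0$-exact, which by the very definition of ``fully reduced'' forces that representative to be $0$, whence $[a] = 0$.

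For the relative statement I would use that $[N, \ud] = 0$ forces $\ud \Omega_i \subset \Omega_i$, so $\ud$ has filtration degree $0$. Thus the differential governing $H(s \,|\, \ud)$ again decomposes with leading part $s_0$ together with the unchanged, degree-$0$ operator $\ud$, and the identical reduction argument (now carried out on the relative complex, with $s_0$ acting on $\ud$-cohomology) produces the injection $\pi' : H(s \,|\, \ud) \to H(s_0 \,|\, \ud)$.

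The main obstacle is precisely the well-definedness of $\pi$ together with the termination of the reduction. A given class may admit several fully reduced representatives, and their leading terms can differ by the image of a higher-order part of $s$ — the higher differentials of the spectral sequence — so that the leading $s_0$-class is a priori only well defined in the subquotient $E_\infty = \mathrm{gr}\, H(s)$ of $E_1 = H(s_0)$; the asserted injection into $H(s_0)$ then follows from this subquotient structure by splitting the filtration. Likewise, the iterated reduction must terminate for ``fully reduced'' to be meaningful. Both points are controlled by the fact that the filtration is bounded below by $0$ and that one works with local forms as formal power series in $\lambda$ (equivalently, at bounded form and ghost degree), which makes the associated spectral sequence converge and the reduction stop after finitely many (or formally summable) steps.
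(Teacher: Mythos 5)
Your proposal is correct and follows essentially the same route as the paper: extract $s_0^2=0$ from the degree-zero component of $s^2=0$, then iteratively trade a representative whose leading term is $s_0$-exact (mod $\ud$) for a cohomologous one of strictly higher filtration degree, and read off the image class from the leading term of a fully reduced representative. The well-definedness you defer to spectral-sequence generalities is exactly what the paper's proof supplies concretely, by comparing two reduced representatives with leading degrees $i\le\tilde\imath$ and showing first $i=\tilde\imath$ and then $[\omega_i]=[\tilde\omega_i]$ in $H(s_0\,|\,\ud)$; termination is handled in both arguments by the boundedness of the filtration on the relevant space of local forms.
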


It follows that if we can show that the relevant cohomologies of $s_0$ are trivial, we proved the triviality of the corresponding cohomologies of $s$. We note that such a perturbative approach is natural, as we are also performing the renormalization order by order, and will thus try to lift violations of Ward identities in the same way. For our purposes, the following cohomology classes are relevant:

\begin{enumerate}
\item The cohomology of fields, which governs the possibility to renormalize a gauge invariant field so that the result is invariant under the renormalized gauge transformation, \cf \eqref{eq:12c}. 
This corresponds to $H(s)$ at form degree $d$ and ghost number 1.
\item The cohomology of $s$ in the class of Lagrangeans, i.e., fields of the form $L(f)$ with $f \in C^\infty_c(\Sigma)$, modulo the equivalence relation \eqref{equivalence}. In standard terminology, this corresponds to $H(s|\ud)$ at form degree $d$. We furthermore require the $d$-form to be a scalar under target space Lorentz transformations and coordinate changes on $\Sigma$. At ghost number 0, this classifies the renormalization ambiguities (counterterms), and at ghost number 1 it gives the potential gauge anomalies.

\item The cohomology of $s$ at ghost number 2 in the space of fields of the form $Q(f)$, with $f \in \Gamma^\infty_c(\Sigma, T^* \Sigma)$, $\nabla_\nu f_\mu - \nabla_\mu f_\nu = 0$, \cf \eqref{eq:12b}. 
In standard terminology, this corresponds to $H(s| \ud)$ at ghost number 2 and form degree $d-1$. 
This cohomology governs the obstructions to achieve a nilpotent interacting BRST charge.
\end{enumerate}

Hence, due to the following proposition, there are no anomalies of any of the above type:

\begin{proposition}
The cohomologies $H(s_0)$ and $H(s_0|\ud)$ vanish at positive ghost number.
\end{proposition}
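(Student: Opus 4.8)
The plan is to reduce everything to the contractible-pair structure of the free BV differential $s_0 = \gamma_0 + \delta_0$, whose crucial input is that $\ud X$ has maximal rank $d$: the linearized gauge transformation $\gamma_0 \Phi^a = C^\mu \ud X^a_\mu$ is then injective in $C$, and its image is exactly the tangential subspace $P\Phi$. By Theorem~\ref{thm:filtration} this also suffices to control the full $H(s)$ and $H(s|\ud)$.

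First I would treat $H(s_0)$. Filtering the local functionals by the antifield number, the lowest differential in the associated spectral sequence is the free Koszul--Tate differential $\delta_0$. Since the linearized wave operator $S_0''$ of \eqref{eq:S_0''} is, up to an invertible linear transformation, normally hyperbolic, the stationary surface is regular and $\delta_0$ is a genuine resolution: $H_k(\delta_0)=0$ for $k\geq 1$ and $H_0(\delta_0)$ is the space of on-shell functionals. Concretely, the nonminimal pair $(B^\ddagger,\bar C^\ddagger)$ and---because $\delta_0 C^\ddagger_\mu = \ud X^a_\mu \Phi^\ddagger_a$ with $\ud X$ of maximal rank---the pair $(C^\ddagger, P\Phi^\ddagger)$ are $\delta_0$-contractible, while $(Q\Phi^\ddagger, Q\Phi)$ furnishes the Koszul resolution of the transversal equations of motion. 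By the standard collapse argument this yields $H(s_0)\cong H(\gamma_0, H_0(\delta_0))$.

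It then remains to compute the $\gamma_0$-cohomology on on-shell functionals. Here $\gamma_0(P\Phi)^a = C^\mu \ud X^a_\mu$ with $\gamma_0 C^\mu = 0$, and $\gamma_0\bar C^\mu = iB^\mu$ with $\gamma_0 B^\mu = 0$, so that $(P\Phi, C)$ and $(\bar C, B)$, together with all their derivatives, are contractible pairs---again using the maximal rank of $\ud X$ for the first. Eliminating these by the basic lemma on trivial pairs, the cohomology is generated by the transversal fluctuations $Q\Phi$ modulo their equations of motion, all carrying ghost number $0$. Hence $H(s_0)$ is concentrated in ghost number $0$, and in particular vanishes at positive ghost number.

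For $H(s_0|\ud)$ I would use the descent equations. An $s_0$-cocycle modulo $\ud$ of top form degree $d$ and ghost number $p$ generates, via the algebraic Poincar\'e lemma on local forms, a descent $s_0\omega_j + \ud\omega_{j-1}=0$ terminating in an $s_0$-cocycle $\omega_0$ of form degree $0$ and ghost number $p+d$. For $p\geq 1$ this is positive, so $\omega_0$ is $s_0$-exact by the previous step; reinserting this and climbing back up the descent (again invoking the algebraic Poincar\'e lemma) trivialises the original cocycle in $H(s_0|\ud)$. The conceptual heart is throughout the elimination of contractible pairs, which rests entirely on $\ud X$ having maximal rank, i.e.\ on expanding around a background of the correct dimension. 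The genuinely technical points are the acyclicity of $\delta_0$ in the jet-space setting (standard for the normally hyperbolic $S_0''$, but requiring regularity of the stationary surface) and, more delicately, the algebraic Poincar\'e lemma together with the bookkeeping ensuring that the mod-$\ud$ descent closes; the field-independent de Rham cohomology of $\Sigma$ enters only at ghost number $0$ and so does not interfere. I expect this mod-$\ud$ descent to be the main obstacle.
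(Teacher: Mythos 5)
Your argument is correct, and it rests on exactly the same pivotal fact as the paper's proof: since $\ud X$ has maximal rank, $s_0 (P\Phi)^a = \ud X^a_\mu C^\mu$ makes $(P\Phi, C)$ (with all derivatives) a set of contractible pairs, and $C$ is the only generator of positive ghost number. Where you differ is in the surrounding machinery. For $H(s_0)$ you first run the antifield-number spectral sequence, establishing acyclicity of the free Koszul--Tate differential and then computing $H(\gamma_0)$ on-shell; the paper skips this entirely and simply removes the trivial pairs from $H(s_0)$ directly, which already suffices because nothing else carries positive ghost number. Your extra step is harmless (and the pairs $(C^\ddagger, P\Phi^\ddagger)$, $(B^\ddagger,\bar C^\ddagger)$ you identify are indeed $\delta_0$-contractible), but it buys nothing for the statement at hand, which only concerns positive ghost number. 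The more substantive divergence is in $H(s_0|\ud)$: you go through the descent equations, which requires the algebraic Poincar\'e lemma, the vanishing of $H(s_0)$ at positive ghost number in \emph{all} form degrees, and the bookkeeping you correctly flag as the delicate point (note also that the descent need not reach form degree $0$; it may terminate at an intermediate degree, though the bottom is still an $s_0$-cocycle of ghost number $\geq p$, so your argument survives, and the field-independent de Rham classes sit at ghost number $0$ while the obstructions in your descent have ghost number $\geq p+2$, so they indeed do not interfere). The paper instead writes down an explicit contracting homotopy $r$ such that $N = \{s_0, r\}$ counts the trivial-pair variables and commutes with $\ud$, and then invokes the basic lemma of \cite{BrandtDragonKreuzer89}; this eliminates the pairs from $H(s_0|\ud)$ in one stroke and entirely sidesteps the descent. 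Both routes are standard and valid; the paper's is shorter and avoids the mod-$\ud$ subtleties, while yours is self-contained modulo the algebraic Poincar\'e lemma and makes the dependence on the vanishing of $H(s_0)$ explicit.
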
 
\begin{proof}
For the the action of $s_0$ on the components $P \Phi$ and $Q \Phi$ and the corresponding antifields, we have
\begin{align*}
 s_0 (Q \Phi)^a & = 0, & s_0 (P \Phi)^a & = \ud X^a_\mu C^\mu, \\
 s_0 (Q \Phi^\ddagger)_a & =  h_{ab} \nabla^\mu \del_\mu (Q 
 \Phi)^b, & s_0 (P \Phi^\ddagger)_a & = 0.
\end{align*}
As $\ud X$ has maximal rank, it follows that the $P \Phi$'s and $C$'s and their derivatives form contractible pairs \cite{BarnichBrandtHenneaux00}, so they can be removed from the cohomology of $s_0$. Hence, the cohomology of $s_0$ does not contain any ghosts, and is thus trivial at positive ghost number.

To show this for $H(s_0 | \ud)$, we can proceed as follows: We define an operator
\[
 r \doteq \sum_{\underline \mu} \del_{\underline \mu} \left( h_{ab} g^{\nu \lambda} \ud X_\lambda^b \Phi^a \right) \del_{\del_{\underline \mu} C^\nu},
\]
where we sum over multiindices $\underline \mu$.
Then the operator $N \doteq \{ s_0, r \}$ counts the number of elements of the trivial pair. This operator anticommutes with $\ud$, so by the basic lemma of \cite{BrandtDragonKreuzer89}, also the cohomology $H(s_0 | \ud)$ is trivial at positive ghost number. \qed
\end{proof}

\begin{remark}
Let us compare with the situation for the Polyakov action. There, one also has an auxiliary dynamical metric $g_{\mu \nu}$ and a scalar ghost $c$, and the corresponding antifields. The extended action reads
\begin{multline*}
 S = \int \Big[ \del_\mu \tilde X^a \del_\nu \tilde X^b h_{ab} g^{\mu \nu} + \tilde X_a^\ddagger c^\mu \del_\mu \tilde X^a + c_\mu^\ddagger c^\lambda \del_\lambda c^\mu + c^\ddagger c^\lambda \del_\lambda c \\
 + g_{\mu \nu}^\ddagger \left( c^\lambda \del_\lambda g^{\mu \nu} - \del_\lambda c^\mu g^{\lambda \nu} - \del_\lambda c^\nu g^{\lambda \mu} - c g^{\mu \nu} \right) \Big] \sqrt{-g} \ud x.
\end{multline*}
It is crucial to note that, contrary to our setting, the $\tilde X$ have no component of $\order(\lambda^0)$, i.e., one expands around an event, and not a 2-dimensional object. It follows that the ghosts may not be removed from the cohomology (for discussions of the cohomology for the above action, we refer to \cite{BrandtTroostVanProeyen96,GomisParisSamuel}).
\end{remark}

To close the discussion, we briefly comment on the form of the admissible counterterms. To begin, we note that they may depend on the undifferentiated fields $\Phi^a$ (to be more precise, their normal components). This is not in contradiction to translation invariance, as these are implemented solely by translating the background. For the discussion of the higher derivative terms, let us, for simplicity, restrict to the case of a flat background, i.e., $\nabla \ud X = 0$. The elements in the kernel of $\gamma_0$ are then $Q^a_b \nabla_{\underline \mu} \Phi^b$ with $Q$ the projector on the normal bundle introduced in Section~\ref{sec:Peierls}. In particular, this includes the linearized second fundamental form $\tilde K_{\mu \nu}^a = \tilde \nabla_\mu \ud \tilde X^a_\nu$. Hence, we are dealing with a modified gravity theory that also takes the embedding into an ambient space into account. The appearance of extrinsic curvature terms in the context of string theory was discussed already in \cite{Polyakov86,Kleinert}.

\subsection*{Acknowledgements}
This work was supported by the German Research Foundation (Deutsche Forschungsgemeinschaft (DFG)) through the Institutional Strategy of the University of G\"ot-tingen. 
We would like to thank Klaus Fredenhagen, Hendrik Grundling, Stefan Hollands, Hermann Nicolai, and Karl--Henning Rehren for helpful discussions and comments.

\appendix

\section*{Appendix}
\begin{proof}[Proof of Proposition \ref{prop:dX}]
We set $\lambda = 1$. Let $\phi_\xi: \R \to \Sigma$ be the unique geodesic on $\Sigma$ with $\phi_\xi(0) = x$ and $\del_\tau \phi_\xi(0) = \xi$. Then the map
\[
 [0,1] \times (- \varepsilon, \varepsilon) \ni (s, t) \mapsto \sigma(s, t) = \exp_{X(\phi_\xi(t))}(s \varphi(\phi_\xi(t)))
\]
is a variation through geodesics \cite[Section IX.2]{LangDiffGeo} of the geodesic $\alpha$. Thus, by \cite[Prop. IX.2.8]{LangDiffGeo}, the vector field along $\alpha$,
\[
 \eta(s) = \del_t \sigma(s,t)|_{t=0}
\]
is a Jacobi field. The initial conditions follow from
\begin{gather*}
 \eta(0) = \del_t \sigma(0,t)|_{t=0} = \xi(x), \\
 \nabla_s \eta(0) = \nabla_s \del_t \sigma(s,t)|_{s = t = 0} = \nabla_t \del_s \sigma(s,t)|_{s = t = 0} = \nabla_t \varphi(\phi_\xi(t))|_{t=0} = \nabla_\xi \varphi(x).
\end{gather*}
In the second line we used \cite[Lemma VIII.5.3]{LangDiffGeo} to commute the derivatives \wrt $s$ and $t$.
\end{proof}

\begin{proof}[Proof of Theorem~\ref{thm:filtration}]
Nilpotency of $s_0$ follows from the nilpotency of $s$. Let $\omega$ be a representative of a nonvanishing cohomology class $[\omega] \in H(s | \ud)$. Consider its lowest nonvanishing component $\omega_i \in \Omega_i$. It follows that
\begin{equation*}
 s_0 \omega_i + \ud \nu_i = 0.
\end{equation*}
Assume that $\omega_i$ is trivial in $H(s_0 | \ud)$, i.e.,
\[
 \omega_i = s_0 \hat \omega_i + \ud \hat \nu_i.
\]
Define
\[
 \omega' = \omega - s \hat \omega_i - \ud \hat \nu_i.
\]
This is also an element of $[\omega]$, whose lowest nonvanishing component $\omega'_{i'}$ is at a higher level $i' > i$. At some $i$, this procedure has to stop, as otherwise $[ \omega ]$ would have a representative whose components all vanish, i.e., $[ \omega ] = 0$. At the point where this recursion stops, we define $\pi'([\omega]) = [\omega_i]$, where on the \rhs we take the equivalence class in $H(s_0 | \ud)$. It remains to show that this map is unique. So assume that, by taking another representative $\tilde \omega \in [\omega]$, we arrive at a different equivalence class $[\tilde \omega_{\tilde \imath}] \in H(s_0 | \ud)$. Assume $i < \tilde \imath$. As $\omega - \tilde \omega \sim 0$, we must have
\[
 s_0 \omega_i + \ud \nu_i = 0.
\]
But this implies that $\omega_i$ is trivial in $H(s_0 | \ud)$, contrary to the assumption. Hence $i = \tilde \imath$. In that case, $\omega - \tilde \omega \sim 0$ leads to
\[
 s_0 (\omega_i - \tilde \omega_i) + \ud \nu'_i = 0,
\]
so $\omega_i$ and $\tilde \omega_i$ are indeed representatives of the same cohomology class in $H(s_0 | \ud)$. The proof for $\pi$ proceeds completely analogously.
\end{proof}

\end{document}